\newtheorem{theorem}{Theorem}[section]
\newtheorem{definition}[theorem]{Definition}
\newtheorem{proposition}[theorem]{Proposition}
\newtheorem{lemma}[theorem]{Lemma}
\newcommand{\arnaud}[1]{(A: \textcolor{blue}{#1})}
\newcommand{\jean}[1]{(J: \textcolor{gray}{#1})}
 \newcommand{\ar}[1]{}
 \newcommand{\arnaud}[1]{}
 \newcommand{\je}[1]{}
 \newcommand{\jean}[1]{}
\title{Finding Weakly Simple Closed Quasigeodesics on Polyhedral Spheres\thanks{This research was partially supported by the ANR project
Min-Max (ANR-19-CE40-0014)), the ANR project SoS (ANR-17-CE40-0033) and the Bézout Labex, funded by ANR, reference ANR-10-LABX-58.}}
\author{Jean Chartier\thanks{Univ Paris Est Creteil, CNRS, LAMA, F-94010 Creteil, France, jean.chartier@u-pec.fr} \and Arnaud de Mesmay\thanks{LIGM, CNRS, Univ. Gustave Eiffel, ESIEE Paris, F-77454 Marne-la-Vall\'ee, France, arnaud.de-mesmay@univ-eiffel.fr}}
\begin{document}

\maketitle
\begin{abstract}
A closed quasigeodesic on a convex polyhedron is a closed curve that is locally straight outside of the vertices, where it forms an angle at most $\pi$ on both sides. While the existence of a simple closed quasigeodesic on a convex polyhedron has been proved by Pogorelov in 1949, finding a polynomial-time algorithm to compute such a simple closed quasigeodesic has been repeatedly posed as an open problem. Our first contribution is to propose an extended definition of quasigeodesics in the intrinsic setting of (not necessarily convex) polyhedral spheres, and to prove the existence of a weakly simple closed quasigeodesic in such a setting. Our proof does not proceed via an approximation by smooth surfaces, but relies on an adapation of the disk flow of Hass and Scott to the context of polyhedral surfaces. Our second result is to leverage this existence theorem to provide a finite algorithm to compute a weakly simple closed quasigeodesic on a polyhedral sphere. On a convex polyhedron, our algorithm computes a simple closed quasigeodesic, solving an open problem of Demaine, Hersterberg and Ku. 
  \end{abstract}

\section{Introduction}

A geodesic is a curve on a surface, or more generally in a manifold, which is locally shortest. The study of geodesics on surfaces dates back at least to Poincar\'e~\cite{poincare} and led to a celebrated theorem of Lyusternik and Schnirelmann~\cite{lyusternik} proving that any Riemannian sphere admits at least three distinct simple (i.e., not self-intersecting) closed geodesics (while the initial proof of the theorem was criticized, the result is now well-established, see for example Grayson~\cite{grayson}). This bound is tight, as showcased by ellipsoids.

In this article, we investigate closed geodesics in a polyhedral setting. In such a setting, the following relaxed notion is key: a \emph{quasigeodesic} is a curve such that the angle is at most $\pi$ on both sides at each point of the curve. In 1949, Pogorelov~\cite{pogorelov} proved the existence of three simple (i.e., non self-intersecting) and closed quasigeodesics on any convex polyhedron. The proof is non-constructive and it was asked by Demaine and O'Rourke~\cite[Open Problem 24.24]{demaine} whether one could compute such a closed quasigeodesic in polynomial time. Recent progress on this question was made by Demaine, Hersterberg and Ku~\cite{demaine2} who provided the first algorithm to compute a closed quasigeodesic on a convex polyhedron, and their algorithm runs in pseudo-polynomial time. However, their algorithm is ill-adapted to find closed quasigeodesics which are simple -- this has remained an open problem~\cite[Open Problem~1]{demaine2}. Furthermore, as they note, for this problem, ``even a finite algorithm is not known or obvious'': indeed there is no known upper bound on the combinatorial complexity of a simple closed quasigeodesic (for example the number of times that it intersects each edge), so there is no natural brute-force algorithm. We refer to the extensive introduction of~\cite{demaine2} for a panorama on the difficulties in finding closed quasigeodesics, and to~\cite{orourke} for recent results and questions on quasigeodesics on tetrahedra.

\subparagraph*{Our results.} Our contributions in this article are two-fold.

First, we extend the theorem of Pogorelov to a non-convex and non-embedded setting. Precisely, we work in the abstract setting of compact \emph{polyhedral spheres}, which consist of the following data: (1) a finite collection of Euclidean polygons, and (2) gluing rules between pairs of boundaries of equal length, so that the topological space resulting from the gluings is a topological sphere. A face, edge or vertex of a polyhedral sphere is respectively a polygon, an edge or a vertex of one of the polygons, and a vertex is \emph{convex} (respectively \emph{concave}) if the sum of the angles of the polygons around the vertex is at most $2\pi$, respectively at least $2\pi$. Let us emphasize that such a polyhedral sphere is not a priori embedded in $\mathbb{R}^3$. In particular, edges of the triangles might not be shortest paths. This intrinsic description of non-smooth surfaces appears under various names in the literature, see, e.g., piecewise-linear surfaces~\cite{erickson2013tracing} or intrinsic triangulations~\cite{sharp}, and dates back to at least Alexandrov, who proved~\cite[Chapter~4]{alexandrov} that when all the vertices are convex, such a polyhedral sphere is the metric structure of a unique convex polyhedron in $\mathbb{R}^3$ (see~\cite{kane} for an algorithmic version of this result). In the non-convex case, a celebrated theorem of Burago and Zalgaller~\cite{burago}, shows that one can always find a piecewise-linear isometric embedding of a compact polyhedral sphere into $\mathbb{R}^3$, but it might require a large number of subdivisions and the proof has to our knowledge not been made algorithmic.

Note that by definition, a polyhedral sphere is locally Euclidean at every point that is not a vertex. We propose the following generalization of the definition of quasigeodesics to a polyhedral sphere $S$: a closed quasigeodesic is a closed curve that is locally a straight line around any point that is not a vertex, and that is locally a pair of straight lines around a vertex, forming an angle \emph{at most} $\pi$ on each side if the vertex is convex, and forming an angle \emph{at least} $\pi$ on each side if the vertex is concave. A closed curve $\gamma:\mathbb{S}^1 \rightarrow S$ is \emph{simple} if it is injective, and is \emph{weakly simple} if it is a limit of simple curves (see Section~\ref{S:prelim} for details). 

Our first theorem shows the existence of a weakly simple closed quasigeodesic of controlled length on a polyhedral sphere. We denote by $M$ the $\emph{edge-sum}$ of $S$, which we define as the sum of the lengths of the edges of an iterated barycentric subdivision of a triangulation of $S$.

\begin{theorem}[Existence]\label{thm:existence}
Let $S$ be a polyhedral sphere and denote by $M$ its edge-sum. There exists a weakly simple closed quasigeodesic of length at most $M$.
\end{theorem}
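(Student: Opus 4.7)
The plan is to combine a Birkhoff-type min--max argument with an adaptation of the Hass--Scott disk flow to the polyhedral setting, as announced in the introduction. The disk flow is a discrete curve-shortening procedure designed so that individual steps preserve weak simplicity, which makes it much better suited to our problem than a continuous curve-shortening flow (the latter would in any case be delicate to define across vertices).

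First I would construct an initial sweepout $(\gamma_t)_{t\in[0,1]}$ of $S$ by weakly simple closed curves, with $\gamma_0$ and $\gamma_1$ reduced to points and $\max_t \mathrm{length}(\gamma_t) \leq M$. This is where the edge-sum enters: fixing an iterated barycentric subdivision $T$ of a triangulation of $S$ and picking a spanning tree of the dual graph of $T$, I would sweep out $S$ by adding the faces of $T$ in the order induced by the tree, letting each intermediate $\gamma_t$ be the boundary of the current subcomplex. Every such $\gamma_t$ is a subgraph of the $1$-skeleton of $T$, so its length is at most the total length of the edges of $T$, namely $M$.

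Next I would adapt the disk flow to a polyhedral sphere. Cover $S$ by two finite families $\mathcal{D}_1,\mathcal{D}_2$ of closed topological disks, sufficiently refined that any short enough subarc of a curve lies in a disk of one of them, and ``convex enough'' that shortest paths with prescribed endpoints on the boundary are well-defined. Disks entirely in the Euclidean interior of a face behave as in the classical Hass--Scott setting; disks surrounding a convex vertex $v$ admit shortest paths that may cross $v$ as a pair of segments meeting at an angle at most $\pi$ on each side, matching the quasigeodesic condition; disks surrounding a concave vertex admit shortest paths that avoid the vertex. Given a curve and a disk, one replaces the portions of the curve inside the disk by the corresponding shortest arcs. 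As in Hass--Scott, on each disk the shortest arcs realizing pairwise disjoint endpoint pairs can be chosen pairwise non-crossing, so each step preserves weak simplicity and does not increase length.

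Finally, the min--max argument. Let $W$ be the infimum, over all sweepouts obtained from the initial one by iterated application of the disk flow and continuous reparametrization, of $\max_t \mathrm{length}(\gamma_t)$; then $W \leq M$. A standard compactness argument on curves of bounded length on the compact space $S$ should produce a weakly simple closed curve $\gamma^\ast$ of length $W$, essentially fixed by the disk flow, hence locally straight away from vertices and making an angle at most $\pi$ on each side at every convex vertex it meets. The step I expect to be the main obstacle is the behavior at concave vertices: the flow pushes curves away from them, so a concave vertex can appear on $\gamma^\ast$ only as a ``pinch point'' reached in the weak limit of the min--max sequence. To finish the proof one has to show that at such a point the two sides of $\gamma^\ast$ form angles at least $\pi$; otherwise, a small local deformation in the sweepout would strictly decrease $\max_t \mathrm{length}(\gamma_t)$, contradicting the minimality of $W$. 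This requires a careful local analysis of how arcs of the sweepouts accumulate near a concave vertex, and it is where the polyhedral setting genuinely departs from the smooth Hass--Scott argument.
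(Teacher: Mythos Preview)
Your overall strategy---build a sweep-out of width at most $M$, apply a polyhedral disk flow, and run a min--max argument---is the paper's. But the geometry at vertices is essentially reversed in your description, and one substantial step is missing.

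On the geometry: in a \emph{convex} star (cone angle $<2\pi$) the shortest path between two boundary points never passes through the apex; it goes around on the side of smaller angle. In a \emph{concave} star (cone angle $>2\pi$) the shortest path between boundary points whose two side angles are both $\ge\pi$ \emph{does} go through the apex, making angles $\ge\pi$ on each side. So a naive shortest-path replacement pushes curves \emph{off} convex vertices and \emph{onto} concave ones, the opposite of what you wrote. Your ``main obstacle'' is therefore misplaced: the concave case is where shortest paths already do the work. The real delicacy is at \emph{convex} vertices, where shortest paths are non-unique (one on each side when both angles equal $\pi$) and where a pure shortest-path flow has no fixed point that actually sits on the vertex, so its fixed points cannot include all quasigeodesics. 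The paper's flow is deliberately \emph{not} shortest-path replacement: in a convex star it keeps an arc on the vertex whenever both side angles are $\le\pi$, and only otherwise replaces by a shortest path in the appropriate region. These tailored rules are what make the fixed-point set coincide exactly with the quasigeodesics.

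On the missing step: you never address how the flow acts on a sweep-out. The map $\Phi$ on individual curves is discontinuous---two arbitrarily close arcs whose gates straddle an angle-$\pi$ threshold get sent to opposite sides of the vertex---so applying it fiberwise tears the family and the result is no longer a sweep-out. The paper devotes a long construction to repairing these tears by inserting interpolating fibers of controlled length, and the specific rules for $\Phi$ at convex vertices are chosen precisely so that such interpolation is possible (see the remark following the construction of $\hat\Phi$). Without this, ``sweep-outs obtained from the initial one by iterated application of the disk flow'' is not a well-defined class, and the min--max does not go through.
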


The original proof of Pogorelov in the convex case proceeds by first approximating the polyhedron with smooth surfaces, and then taking the limit of the simple closed geodesics on the smooth surfaces, whose existence is guaranteed by the Lyusternik--Schnirelmann theorem. The proof technique for that latter theorem, originating from the work of Birkhoff~\cite{birkhoff}, goes roughly as follows: we consider \emph{sweep-outs}, i.e., a family of simple closed curves sweeping the polyhedron from one point to another point (see Section~\ref{S:prelim} for a precise definition), and consider the sweep-out where the longest curve has minimal length. Then, by applying a \emph{curve-shortening process}, one can use this optimal sweep-out to find simple closed geodesics. This last step is notoriously perilous~\cite{ballmann,btz,grayson}, hence the tumultuous history of the Lyusternik-Schnirelmann theorem. Our proof proceeds by working directly on the polyhedral sphere and we prove the existence of a weakly simple closed quasigeodesic using a similar technique based on sweep-outs. Our key technical contribution is to rely on a curve-shortening process that is well-adapted to the polyhedral structure of the problem: we adapt the \emph{disk flow} originally designed by Hass and Scott~\cite{shortening} for Riemannian surfaces so as to handle the disks formed by the stars of vertices in a seamless way. We are hopeful that this polyhedral variant of the disk flow could find further applications in the study of quasigeodesics.

Theorem~\ref{thm:existence} provides, in addition to the existence of a weakly simple closed quasigeodesic, a bound on its length. Our second result is to leverage this bound in order to control the combinatorics of the quasigeodesic, which allows us to design a finite algorithm to compute a weakly simple closed quasigeodesic on a polyhedral sphere.

\begin{theorem}[Algorithm]\label{thm:algorithm}
Given a polyhedral sphere $S$, we can compute a weakly simple closed quasigeodesic in time exponential in $n$ and $\lceil M/h \rceil$, where $n$ is the number of vertices of $S$, $M$ is its edge-sum, and $h$ is the smallest altitude over all triangles of some triangulation of $S$. 
\end{theorem}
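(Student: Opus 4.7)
The plan is to combine Theorem~\ref{thm:existence} with a brute-force enumeration over combinatorial types of the target quasigeodesic, whose size is controlled by the length bound $M$ and the altitude $h$. Fix a triangulation $\mathcal{T}$ of $S$ realising the minimum altitude $h$; any closed quasigeodesic on $\mathcal{T}$ decomposes into straight chords inside triangles of $\mathcal{T}$, with breakpoints either at edges (where straightness holds after unfolding) or at vertices (where the angle condition applies). It is therefore completely described, up to continuous parameters, by its \emph{combinatorial type}, a cyclic sequence $(T_i,\sigma_i)_{i=1}^N$ where $T_i$ is a triangle of $\mathcal{T}$ and $\sigma_i$ identifies the pair of edges or vertices of $T_i$ that are the endpoints of the $i$-th chord. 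Given a fixed combinatorial type, the entry/exit positions on the crossed edges are unknowns; straightness at each edge breakpoint is a linear equation (obtained by unfolding the two incident triangles into the plane), the angle conditions at each vertex breakpoint are linear inequalities, and weak simplicity reduces to a combinatorial non-crossing condition on the type that can be verified in polynomial time. Realizability of a combinatorial type is therefore the feasibility of a linear program of size polynomial in $N$ and the size of $\mathcal{T}$.

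The quantitative heart of the argument is a bound on the number $N$ of chords of any weakly simple closed quasigeodesic of length at most $M$. Call a chord \emph{long} if both of its endpoints are at distance at least $h/3$ from every vertex of its triangle, and \emph{short} otherwise. A long chord has length $\Omega(h)$ by the altitude hypothesis, so there are at most $O(M/h)$ long chords in total. A short chord lies in the star of some vertex $v$; in the unfolded star of $v$, the short chords at $v$ form a family of non-crossing straight segments, and weak simplicity forbids the underlying line from winding around $v$ more than once, capping the number of short chords at $v$ by $O(\deg v)$. Summing over the $n$ vertices yields $N = O(n \cdot M/h)$.

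The algorithm then enumerates all $(Cn)^N$ cyclic sequences of chord descriptors of length at most $N$, filters out those that are not weakly simple combinatorially, and for each remaining one solves the associated linear program; Theorem~\ref{thm:existence} ensures that at least one sequence is realized, and the algorithm returns the corresponding weakly simple closed quasigeodesic. The total running time is dominated by the enumeration and is exponential in $n$ and $\lceil M/h \rceil$ as claimed. The main obstacle is the bound on short chords near a vertex: weak simplicity is crucial here, as without it the curve could wind arbitrarily many times around a convex vertex with chords of arbitrarily small length, and transferring the classical "no winding" argument for simple closed curves to the weakly simple (limit of simple) setting requires a careful analysis of the local structure of the curve in the unfolded star of a vertex.
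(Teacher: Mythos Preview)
Your overall architecture is the paper's: bound the combinatorial size of some weakly simple closed quasigeodesic of length $\le M$, enumerate all candidate types up to that bound, and test each one. The difference that matters is in the complexity bound, where your long/short dichotomy does not go through as written. First, the claim ``a long chord has length $\Omega(h)$'' is false: in the triangle $A=(0,0)$, $B=(1,0)$, $C=(1,\varepsilon)$ one has $h=\varepsilon$, yet the chord joining the points on $AB$ and $AC$ at distance $h/3$ from $A$ has both endpoints at distance $\ge h/3$ from every vertex (so it is ``long'') while its length is $\Theta(\varepsilon^2)\ll h$. Second, and more structurally, your bound ``$O(\deg v)$ short chords at $v$'' only controls a \emph{single} connected component of $\gamma$ in the star of $v$; the curve can revisit that star many times, and nothing in your argument bounds the number of visits --- the jump from ``$O(\deg v)$ at each $v$'' to ``$N=O(n\cdot M/h)$'' is not justified. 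The paper fixes both issues at once with a different statement: every subarc of $\gamma$ of length $h$ crosses at most $d+1$ edges. It uses your non-winding observation to bound a single pass through a star by $\deg v$ crossings, and then a short geometric lemma you are missing --- in a quadrilateral formed by two triangles glued along an edge, opposite sides are at distance at least the smallest altitude --- to show that upon exiting a star the curve must spend length $\ge h$ before anything else happens. That is what buys the $M/h$ factor.

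For the realizability test, the paper avoids your linear program by first pushing the target quasigeodesic in its normal direction until it meets a vertex (this preserves length, weak simplicity, and quasigeodesicity); once pinned at a vertex, each maximal subarc between consecutive vertices is the unique straight segment in the unfolding of the crossed triangles, so there are no continuous unknowns and one merely checks that this segment crosses the guessed edges in the guessed order and that the angles at the pinned vertices are admissible. Your LP route is workable too, but note that weak simplicity is \emph{not} a function of the combinatorial type alone (two chords crossing the same pair of edges may or may not intersect depending on the actual crossing positions), so you must either enrich the type with ordering data on each edge and encode those as further linear inequalities, or check weak simplicity of the realized curve after the fact.
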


Note that a bound on the length of a quasigeodesic does not translate directly into a bound on the number of times that it crosses each edge of the polyhedral sphere, as these crossings could happen arbitrarily close to vertices, and thus contribute an arbitrarily small length. Our proof of Theorem~\ref{thm:algorithm} investigates the local geometry of quasigeodesics around vertices to show that this does not happen too much, and that one can indeed bound the multiplicity of each edge. Then, our algorithm guesses the correct combinatorics of the simple closed quasigeodesic and checks in polynomial time that it is realizable.

Our proof techniques for Theorem~\ref{thm:existence} only provide the existence of weakly simple quasigeodesics instead of simple quasigeodesics. We believe this to be a necessary evil in any generalization to the non-convex case, as shortest paths accumulate on concave vertices, making it impossible to define a curve-shortening process in the neighborhood of those which preserves simplicity. However, when all the vertices are convex, the result of Pogorelov does show the existence of a (actually three) simple closed quasigeodesics, where we include as a degenerate simple case a curve connecting twice two vertices of curvature at least $\pi$. Furthermore, his proof also provides an upper bound on the length of this simple quasigeodesic, as we explain an the end of Section~\ref{S:existence}. Since our algorithm behind Theorem~\ref{thm:algorithm} only relies on such an upper bound on the length and on the (weak) simplicity of the sought after curve, we can also use it to compute simple closed quasigeodesics in the convex case. This solves Open Problem 1 of~\cite{demaine2}, but note that we are still a long way off a polynomial-time algorithm.

\section{Preliminaries}\label{S:prelim}

In this article, a \emph{polyhedral sphere} is a finite collection of Euclidean polygons, and gluing rules for boundaries of the same length, so that the space obtained by identifying the boundaries of the polygons via the gluing rules is homeomorphic to a sphere. Such a sphere is naturally endowed with a metric which is locally Euclidean at every point except at the vertices of the polygons, where it might display a \emph{conical singularity}: if the total angle of the polygons glued around that vertex is larger than $2\pi$ (respectively at most $2\pi$), we say that the vertex is \emph{concave} (respectively \emph{convex}), and its \emph{curvature} is the angular defect compared to $2\pi$ (which is thus negative for concave vertices). Given a (not necessarily convex) polyhedron described via the coordinates of its vertices in $\mathbb{R}^3$, one can easily compute the underlying polygons and thus the structure as a Euclidean sphere. The reverse direction of embedding a polyhedral sphere in $\mathbb{R}^3$ is significantly more intricate (see~\cite{kane} for the convex case and~\cite{burago} for the general case), hence our choice of the intrinsic model.

Triangulating each polygon defining a polyhedral sphere yields a \emph{triangulated polyhedral sphere}. Furthermore, by doing up to two barycentric subdivisions in each triangle if necessary, we can assume that there are no loops nor multiple edges in this triangulation. Note that this triangulation and these subdivisions do not change the metric of the sphere, only change the altitudes of the triangles by a constant factor and do not impact quasigeodesicity (see next paragraph). Therefore, for convenience, in this article we will always assume that our polyhedral spheres are triangulated and that they contain neither loops nor multiple edges, and we will denote such a sphere by $S$ from now on. The \emph{edge-sum} of such a sphere $S$ is the sum of the lengths of its edges (the difference with the definition in the introduction follows from the preprocessing that we just explained). A \emph{shelling} of a triangulated sphere $S$ is an order $(T_1,\ldots,T_{\ell})$ on the triangles that $S$ consists of so that for all $i \in [1,\ell-1]$, $\bigcup_{k=1}^{k=i}T_k$ is homeomorphic to a $2$-disk $D^2$. It is well-known that all the triangulated spheres are shellable, for example because, by Steinitz's theorem~\cite[Chapter~4]{ziegler} they form the $2$-skeleton of a polytope, and those are shellable~\cite{bruggesser1971shellable}. Throughout this article, we use the following notations for a polyhedral sphere: its vertices are denoted by $p_1,\dots,p_n$, its edges by $e_1, \dots, e_m$ (or sometimes $e_{ij}$ to emphasize the vertices that it connects to) and its triangles by $T_1,\dots,T_{\ell}$. The order induced by the numbering of the triangles is a shelling order. The \emph{star} of vertex $p_i$, denoted by $\mathcal {C}_i$  is the union of the triangles $T_k$ having $p_i$ for common vertex, identified along the edges adjacent to $p_i$.  It is \emph{convex} (resp. \emph{concave}) if $p_i$ is (but note that the shortest path in $S$ between two points of a convex star is not necessarily contained in that star). We optionally rename the vertices of $P$ to have $p_1\in T_1$ and $p_n\in T_\ell$. Finally, we denote by $M$ the sum of the lengths of the edges of $S$, and by $h$ the smallest altitude of all the triangles in $S$. Note that $h$ is a lower bound on the distance between any two vertices. For $\gamma$ an edge or a curve on $S$, we denote by $L(\gamma)$ its length.

A \emph{closed curve} $c$ on $S$ is a continuous map $c:\mathbb{S}^1 \rightarrow S$. A closed curve is \emph{piecewise-linear} if it is locally straight except at a finite number of points. 

\begin{definition} A closed curve is a \emph{quasigeodesic} if it is locally straight around every point of $S$ that is not a vertex, and around a vertex it forms an angle at most (respectively at least) $\pi$ on both sides if the vertex is \emph{convex} (respectively concave). 
\end{definition}

We emphasize that this definition is non-standard in the non-convex case, where it is sometimes simply forbidden for a quasigeodesic to go through a concave vertex~\cite{demaine}. Note that a quasigeodesic is straight around a vertex with zero curvature. A closed curve is \emph{simple} if it is injective. Throughout this article, all the curves will always be parameterized at constant speed. We endow the space of piecewise-linear curves with the uniform convergence metric, i.e., $d(c_1,c_2)=\max_{t\in \mathbb{S}^1}d(c_1(t),c_2(t))$. A closed curve is \emph{weakly simple} if it is a limit of simple curves for this metric: intuitively a weakly simple curve is a curve with tangencies but no self-crossings. We denote by $\mathcal{P}$ the set of constant closed curves, i.e., closed curves $c$ such that there exists $p \in S$ such that $\forall t\in \mathbb{S}^1,c(t)=p$.

We denote by $\Omega$ the space of rectifiable closed curves of length at most $M$. This space is compact for the uniform convergence metric, as can be shown via the Arzel\`a-Ascoli theorem, the bound on the length and the constant-speed parameterization providing equicontinuity (see for example~\cite[Theorem~2.5.14]{buragoivanov}). We denote by $\Omega^{pl}$ the subspace of $\Omega$ consisting of piecewise-linear and weakly simple closed curves. A \emph{monotone sweep-out} of $S$ is a continuous map $\beta:\mathbb{S}^2\longrightarrow S$, where $\mathbb{S}^2$ is seen as the quotient of the cylinder $[0,1]\times \mathbb{S}^1$ by the relation which identifies the circles $(0, \mathbb{S}^1)$ and $(1,\mathbb{S}^1)$ to two points, and such that :
\begin{itemize}
\item $\beta(0,\cdot)$ and $\beta(1,\cdot)$ belong to $\mathcal{P}$, i.e., are two constant closed curves on $S$,
\item $\beta$ has topological degree one,
\item for $s\in (0,1)$, each \emph{fiber} $\beta(s,\cdot):\mathbb{S}^1\longrightarrow S$ belongs to $\Omega^{pl}$, and
\item the sweep-out is monotone, i.e., if $D_s$ denotes the disk to the left of $\beta(s,\cdot)$, the disks $D_s$ are nested: $D_s \subseteq D_{s'}$ for $s'>s$.
\end{itemize}
The requirement on the topological degree informally means that each point is covered once by the sweep-out ; it is there to prevent trivial sweep-outs (for example constant at a point). It can be replaced by the requirement that the starting and endpoints are distinct. The monotonicity corresponds to the third condition, and typical sweep-outs in the literature do not assume it (see~\cite{chambers2021constructing}), but in this paper we will only use monotone sweep-outs and thus for simplicity we will henceforth drop the word monotone. The \emph{width} of a sweep-out is the length of the longest fiber. We denote by $\mathcal{B}$ the space of sweep-outs.

The algorithm underlying Theorem~\ref{thm:algorithm} has complexity exponential in $\lceil M/h \rceil$, i.e., it depends on the actual values of the lengths of the boundaries of the polygons. Therefore, we do not work on a real RAM model and rely rather on a word RAM model, which is powerful enough to express all the operations that we require: see for example~\cite[Section~2]{demaine2} for a description of the $O(1)$-Expression RAM model which can be encoded in the word RAM model and allows for a restricted notion of real numbers and algebraic operations thereon.

\section{Disk flow and sweep-outs}\label{S:diskflow}

We start by describing a monotone sweep-out of controlled width.

\begin{lemma}\label{lem:sweepout}
Let $S$ be a triangulated polyhedral sphere of edge-sum $M$. There exists a monotone sweep-out of $S$ of width at most $M$.
\end{lemma}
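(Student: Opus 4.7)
The plan is to construct the sweep-out incrementally by following a shelling order $T_1,\dots,T_\ell$ of the triangulated sphere $S$, with the vertices renumbered so that $p_1\in T_1$ and $p_n\in T_\ell$. For each $i$, the partial union $D_i=\bigcup_{k=1}^i T_k$ is a topological disk whose boundary $\gamma_i=\partial D_i$ is composed of edges of $S$, and therefore has length at most $M$. The sweep-out will consist of $\ell+1$ phases glued end to end: Phase $0$ expands the constant curve at $p_1$ to $\gamma_1=\partial T_1$ inside $T_1$, each Phase $i$ for $1\le i\le \ell-1$ interpolates from $\gamma_i$ to $\gamma_{i+1}$ by sweeping across the newly added triangle $T_{i+1}$, and Phase $\ell$ contracts $\gamma_{\ell-1}=\partial T_\ell$ to the point $p_n$ inside $T_\ell$.

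The main work is to define the fibers within each interpolation phase and bound their length by $M$. In a proper shelling step $T_{i+1}$ meets $D_i$ along either a single edge or two edges sharing a vertex (three edges is impossible since $D_{i+1}$ must remain a topological disk). If the shared edge is $e=v_1v_2$ with third vertex $v$ of $T_{i+1}$ and other edges $e'=v_1v$, $e''=vv_2$, I will parametrize two points $p(t),q(t)$ moving linearly from $v_1,v_2$ toward $v$ along $e',e''$ as $t$ ranges over $[0,1]$, and replace $e$ in $\gamma_i$ by the broken path $v_1\to p(t)\to q(t)\to v_2$. A direct similarity argument yields $|p(t)q(t)|=(1-t)L(e)$, so the total length of the fiber equals $L(\gamma_i)+t\bigl(L(e')+L(e'')-L(e)\bigr)$, which by the triangle inequality interpolates monotonically between $L(\gamma_i)$ and $L(\gamma_{i+1})$. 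When instead two edges $e,f$ meeting at $v$ are shared and $g$ is the third edge, I use the symmetric construction with $p(t),q(t)$ starting at $v$ and moving outward to $v_1,v_2$, replacing the broken arc through $v$ by a three-segment shortcut; the analogous computation gives a length that decreases from $L(\gamma_i)$ to $L(\gamma_{i+1})$. In both cases every fiber has length at most $M$. Phases $0$ and $\ell$ are handled by a linear homothety of $\partial T_1$ about $p_1$ (respectively $\partial T_\ell$ about $p_n$), yielding fibers of length $t\,L(\partial T_1)\le M$ (respectively $(1-t)L(\partial T_\ell)\le M$).

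The main obstacle I expect is verifying weak simplicity of the fibers and monotonicity of the family of disks to the left throughout each phase. Weak simplicity is clean here because the interpolating path inside $T_{i+1}$ is simple and meets the remainder of $\gamma_i$ only at its two endpoints on $\partial T_{i+1}$; at isolated times the fiber may touch itself at a vertex of $S$, which is still weakly simple. Monotonicity follows because the region added inside $T_{i+1}$ during each phase is an expanding quadrilateral bounded by $e$ and the three-segment path (in the one-edge attachment case) or an expanding triangle with vertex $v$ (in the two-edge case), and these regions are nested in $t$ while matching continuously across phase boundaries. Concatenating the $\ell+1$ phases, rescaling the parameter to $[0,1]$, and noting that the sweep starts at $p_1$ and ends at $p_n$ and collectively covers $S$ exactly once, yields a monotone sweep-out of topological degree one and width at most $M$.
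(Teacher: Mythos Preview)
Your construction is correct and is essentially identical to the paper's: your chord $[p(t)q(t)]$ is precisely the segment parallel to the attaching side that the paper sweeps $T_{i+1}$ with, and the rest of your fiber is exactly the paper's ``boundary of the disk, linked to $\sigma_s^i$ by two portions of the edge of $T_i$.'' One minor indexing slip: the intermediate phases should run over $1\le i\le \ell-2$ rather than $\ell-1$, since $T_\ell$ attaches along all three of its edges (so $\gamma_\ell$ is empty) and is already handled separately by your Phase~$\ell$.
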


\begin{proof}
As explained in the preliminaries, up to subdividing triangles at most twice we can assume that the triangulated sphere $S$ contains neither loops nor multiple edges and is shellable. The monotone sweepout will be obtained by sweeping each triangle in the shelling order. We first describe families of segments sweeping the triangles: we sweep $T_1$ (resp. $T_\ell$) by segments $\sigma_\mathbb{S}^1$ (resp. $\sigma_s^l$) for $s \in [0,1]$, parallel to the side opposite to $p_0$ (resp. $p_n$). Then, for $i$ from $2$ to $\ell-1$ :
\begin{itemize}

\item{If $T_i$ shares a single side with $\bigcup_{k=1}^{k=i-1}T_k$, we sweep $T_i$ by segments $\sigma_s^i$ parallel to this side.}
\item{If $T_i$ shares two sides with $\bigcup_{k=1}^{k=i-1}T_k$, we sweep $T_i$ by segments $\sigma_s^i$ parallel to the third side.}

\end{itemize}

Now, for each $i\in [2,\ell-1]$  and each segment $\sigma_s^i$ through a certain $T_i$, we continuously associate a loop $\gamma_s^i$ formed by the boundary of the disc $\bigcup_{k=1}^{k=i-1}T_k$, deprived of its intersection with $T_i$ and linked to $\sigma_s^i$ by two portions of the edge of $T_i$ (see Figure~\ref{sweep}). For $i=1$ (resp. $i=\ell$), instead we connect $\sigma_s^i$ to $p_0$ (resp. $p_n$) by two portions of $T_i$'s boundary forming a loop $\gamma_s^i$. The loops $\gamma_s^i$ are weakly simple, pairwise do not cross, and form a continuous family with respect to $s$ (and glue appropriately between $\gamma_1^{i}$ and $\gamma_0^{i+1}$). Furthermore, $\gamma_0^1$ and $\gamma_1^\ell$ are points. Therefore they as a whole form a map $\beta:\mathbb{S}^2 \rightarrow S$ of topological degree one since each generic point is covered by one fiber. By construction, $\beta$ has no fiber $\beta(s,\cdot)$ whose length exceeds $M$. Finally, fibers form the boundaries of nested disks, giving us monotonicity. We have thus constructed a monotone sweepout of $\mathcal{B}$ of width at most $M$.
\end{proof}

\begin{figure}[t]
		\begin{center}
			\includegraphics[scale=0.7]{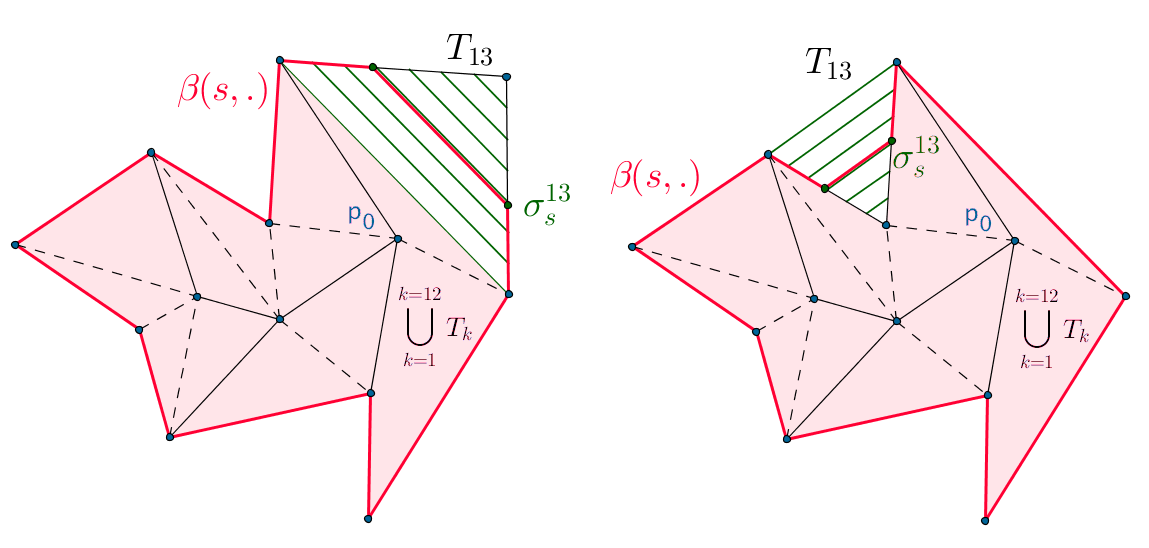}  
			\caption{Construction of a monotone sweepout of $S$}
			\label{sweep}
		\end{center}
\end{figure}

\subparagraph*{The disk flow.} We define here a curve-shortening process that we call the disk flow, which is an iterative process $\Phi$ shortening locally a curve in $\Omega^{pl}$ successively in each star $\mathcal{C}_i$, with the key property that the only fixed points of $\Phi$ are quasigeodesics or trivial curves. In a second step, we will extend $\Phi$ into a map $\hat{\Phi}$ that acts on monotone sweep-outs, which will require interpolating at the points where $\Phi$ is discontinuous. This disk flow is directly inspired by the work of Hass and Scott~\cite{shortening} who defined an analogous flow on Riemannian surfaces. The key difference with their setup is that the star $\mathcal{C}_i$ around a convex vertex is not strongly convex (i.e. there is no uniqueness of shortest paths), which causes additional tears when extending $\Phi$ to sweep-outs and thus requires further operations. Furthermore, instead of working with very small convex disks as they are doing, we work directly with the stars $\mathcal{C}_i$ as we strive to preserve curves whose piecewise-linear structure matches that of $S$. This requires us to deal with tangencies with the boundaries of stars in a different manner.

Let $c$ be a curve in $\Omega^{pl}$ and let $\mathcal{C}_i$ be a star crossed by $c\in \Omega^{pl}$. An \emph{arc} of $\mathcal{C}_i$ is a restriction of $c$ whose image is a connected component of $\mathcal{C}_{i}\cap Im(c)$.
Let $\gamma$ be an arc of $\mathcal{C}_i$, from a closed curve $c$. The points $\gamma(t_0)=c(t_0)\in \partial \mathcal{C}_i$ such that $c([t_0-\varepsilon,t_0))$ or $c((t_0,t_0+\varepsilon])$ is contained in the interior of $\mathcal{C}_i$ for a small enough $\varepsilon >0$, are called the \emph{gates} of $\gamma$. Note that two kinds of arcs have no gates: loops strictly inside the star and arcs never meeting the interior of the star. Unless $\gamma$ is included in $\mathcal{C}_i$, the orientation of $\mathbb{S}^1$ naturally designates a first gate, denoted by \textbf{front}$(\gamma)$, and a final gate, denoted by \textbf{exit}$(\gamma)$. The gates can give access to the interior of the star for values of $t$ greater (resp. less) than $t_0$ -- we say that the gate is open to the right (resp. to the left). A gate can be open to the right and to the left. Thus, front gates are open to the right and exit gates are open to the left. Figure~\ref{gates} illustrates different possible sequences of gates.

\begin{figure}[t]
		\begin{center}
			\includegraphics[width=\textwidth]{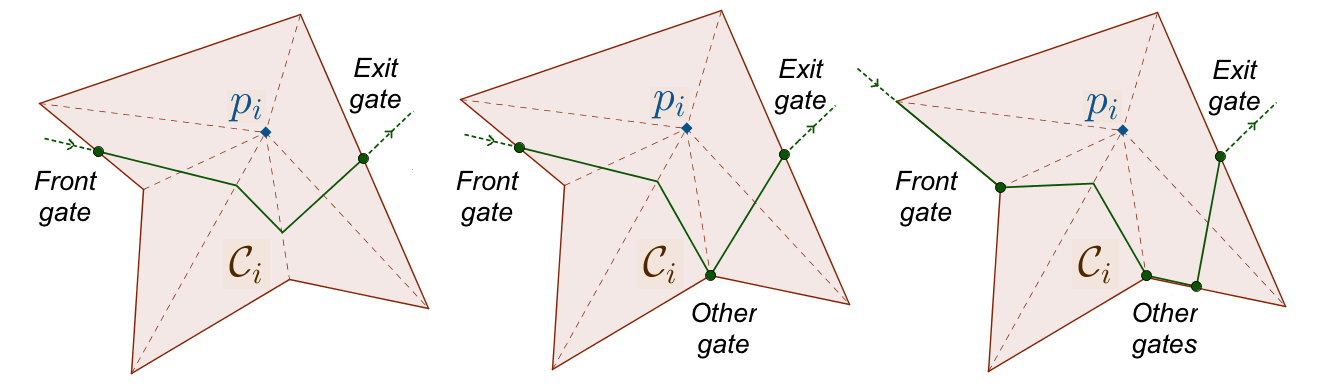} 
			\caption{Three examples of sequences of gates.}
			\label{gates}
		\end{center}
\end{figure}
Relative to two gates $A$ and $B$ \textit{and independently of the path followed between A and B}, we define the \textbf{right region} $\mathcal{C}_i^r(A,B)$ and the \textbf{left region} $\mathcal{C}_i^{\ell}(A,B) $ of the star, as being the two parts of $\mathcal{C}_i$ whose union is $\mathcal{C}_i$ and which intersect along the edges $[Ap_i]$ and $[p_iB]$. The orientation right/left is chosen compatible with that of $c$ between the two gates. The angles of the regions at the $ p_i $ vertex are called the \textbf{right angle} $\theta_r(A,B)$ and the \textbf{left angle} $\theta_{\ell}(A,B)$.

\begin{lemma}
Let $c$ be a curve in $\Omega^{pl}$. The map $\Phi : \Omega^{pl} \longrightarrow \Omega^{pl}$ whose construction we give below verifies the following properties :
\begin{itemize}
\item{The only fixed points of $\Phi$ are quasigeodesics and constant curves.}
\item{$L(\Phi(c))\leq L(c)$, with equality if and only if $c$ is a fixed point.}
\end{itemize}
\end{lemma}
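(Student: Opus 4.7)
The plan is to define $\Phi$ by visiting the stars $\mathcal{C}_1, \ldots, \mathcal{C}_n$ in a fixed order and, in each star, replacing every arc of the current curve by a canonical shortened representative. For an arc $\gamma$ of $c$ in $\mathcal{C}_i$ with front gate $A$ and exit gate $B$, the replacement rule has two cases:
\begin{itemize}
\item If $\gamma$ lies in one of the open regions $\mathcal{C}_i^r(A,B)$ or $\mathcal{C}_i^{\ell}(A,B)$, say the right one, replace it with the shortest path from $A$ to $B$ in the closed right region. This shortest path is the straight line segment in the unfolded wedge when $\theta_r(A,B) \leq \pi$, and otherwise is the broken path $A p_i B$ along the radii.
\item If $\gamma$ is already the broken path $A p_i B$ through $p_i$, check whether it meets the quasigeodesic angle condition at $p_i$ (both angles at most $\pi$ when $p_i$ is convex, both at least $\pi$ when $p_i$ is concave). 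If it does, leave $\gamma$ unchanged; otherwise, replace $\gamma$ by the straight line segment in the region whose angle at $p_i$ is strictly less than $\pi$, which is strictly shorter than $A p_i B$.
\end{itemize}
Arcs with no gates --- loops strictly inside $\mathcal{C}_i$, or subarcs entirely contained in $\partial \mathcal{C}_i$ --- are handled as degenerate cases.

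The inequality $L(\Phi(c)) \leq L(c)$ is immediate, since every replacement is by a shortest path in a region, or by a strictly shorter straight segment. Equality holds exactly when no arc is modified, which by inspection of the two cases means that each arc of $c$ is either a straight line segment inside a single region of every star (hence locally straight at every non-vertex point of $c$), or a broken path $A p_i B$ satisfying the quasigeodesic angle condition at $p_i$. This is precisely the definition of a quasigeodesic; the only curves with no arcs at all are the constants in $\mathcal{P}$. Conversely, each arc of a quasigeodesic is left unchanged by both cases of the rule, so quasigeodesics are fixed points.

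The main obstacle is showing that $\Phi$ actually sends $\Omega^{pl}$ into itself. Piecewise-linearity is immediate, as each replacement consists of one or two line segments. For weak simplicity, the key fact is that since $c$ is weakly simple, the front and exit gates of the arcs of $c$ in $\mathcal{C}_i$ appear on $\partial \mathcal{C}_i$ in a non-interleaved cyclic order, so the shortest-path replacements between these non-interleaved boundary pairs can meet each other only tangentially. The most delicate sub-case is when the rule sends an arc from one region to the common boundary or across $p_i$ (when $\theta_r > \pi$, or when a broken path fails the quasigeodesic test); here one verifies that the new path either runs along the radii $[A p_i] \cup [p_i B]$ on the common boundary of the two regions, or is nested inside the gates of the arcs already present in the target region, and so produces only tangencies with the rest of $c$. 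Once this is checked, the two announced properties of $\Phi$ follow.
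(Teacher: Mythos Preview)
Your construction and argument are essentially the same strategy as the paper's: define $\Phi$ as a composition of star-by-star straightenings $\Phi^i_{loc}$, verify that each step preserves weak simplicity and piecewise-linearity, and observe that length strictly decreases unless every arc is already a shortest path in its region or a broken path $Ap_iB$ satisfying the quasigeodesic angle condition. The paper's weak-simplicity argument (two non-crossing arcs stay non-crossing after straightening because their gate pairs are non-interleaved and the angle comparisons are monotone) is the same as yours in spirit.

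The difference lies in the precise replacement rule. You replace an arc by the shortest path \emph{within its own closed region}, and only afterwards test a broken path $Ap_iB$ against the quasigeodesic condition. The paper's rule is driven directly by the angles: in the convex case with $\theta_r>\pi$ it sends the arc straight to the shortest path in the \emph{opposite} region in one step, and in the concave case with both angles $\geq\pi$ it sends the arc to $Ap_iB$ regardless of which region it came from. For the present lemma both rules give the same fixed points and the same length inequality, so your version is fine here. The paper, however, explicitly remarks just after this proof that its more delicate choice is made with the sweep-out extension (Lemma~\ref{lem:hatphi}) in mind: the interpolation across tears at a convex vertex needs fibers that genuinely pass through $p_i$ and are \emph{fixed} there over a range of angles, which your rule does not guarantee in a single pass. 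So your $\Phi$ proves this lemma but would need to be modified before the next one.

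Two minor gaps in your case analysis: your two bullets do not cover an arc that passes through $p_i$ but is not already the broken segment $[Ap_i]\cup[p_iB]$, nor an arc with intermediate gates on $\partial\mathcal{C}_i$ (which therefore does not lie in a single open region). The paper handles the latter by straightening between each pair of \emph{consecutive} gates rather than only between the front and exit gates. Both are easy to patch and do not affect the overall argument.
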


We stress that the map $\Phi$ is in general \emph{not} continuous.\\

  \noindent \textit{Construction and proof.} We define $\Phi$ as follows. Let $c$ be a closed curve in $\Omega^{pl}$. We pick an arbitrary order on the vertices of $S$, which induces an arbitrary order on the stars $\mathcal{C}_i$. The map $\Phi$ consists in repeating in this order a straightening process $\Phi^i_{loc}$ successively in each star. Consider in $\mathcal{C}_i$ an arc $\gamma$ of $c$. Note that between two of its consecutive gates, $A$ open to the right and $B$ open to the left, $\gamma$ lies in $\mathcal{C}_i$.

If $\mathcal{C}_i$ is convex, the straightening is defined as follows for each subset of $\gamma$ between two consecutive gates (which by a slight abuse of notation we also denote by $\gamma$):

\begin{itemize}
\item{If $p_i\in \gamma$ and if $\theta_r(A,B)$ and $\theta_{\ell}(A,B)$ are less than or equal to $\pi$, we replace $\gamma$ by $[Ap_i]\cup[p_iB]$.}
\item{If $p_i\notin \gamma$ and if $\theta_r(A,B)$ and $\theta_{\ell}(A,B)$ are less than or equal to $\pi$, we replace $\gamma$ by the shortest path between $A$ and $B$ staying in the same region relative to $A$ and $B$.}
\item{If $\theta_r(A,B)$ (resp. $\theta_{\ell}(A,B)$) is strictly greater than $\pi$, we replace $\gamma$ by the shortest path between $A$ and $B$ in $\mathcal{C}_i^{\ell}$ (resp. $\mathcal{C}_i^r$).}
\end{itemize}

If $\mathcal{C}_i$ is concave, the straightening is defined as follows:
\begin{itemize}
\item{If $\theta_r(A,B)$ and $\theta_{\ell}(A,B)$ are at least $\pi$, and even if $p_i\notin \gamma$, we replace $\gamma$ by $[Ap_i]\cup[p_iB]$.}
\item{If $\theta_r(A,B)$ (resp. $\theta_{\ell}(A,B)$) is strictly less than $\pi$, we replace $\gamma$ by the shortest path between $A$ and $B$ in $\mathcal{C}_i^{r}$ (resp. $\mathcal{C}_i^{\ell}$).}
\end{itemize}

In case $\gamma=c$ is strictly included in the interior of $\mathcal{C}_i$, then $\Phi^i_{loc}(c)=0$, where $0$ denotes an arbitrary constant curve based at a point $p$ in $\mathcal{C}_i$.

We denote by $\Phi^i_{loc}$, relative to a given star $\mathcal{C}_i$, the straightening process described above, applied in this star to each arc of a closed curve $c \in \Omega^{pl}$. Then $\Phi$ is defined as the concatenation $\Phi:= \circ_{i=1}^n \Phi^i_{loc}$. Let us first show that $\Phi$ has values in $\Omega^{pl}$, note that it suffices to prove it for $\Phi^i_{loc}$. It is immediate that the image under $\Phi^i_{loc}$ is piecewise-linear. In order to prove that the image is weakly simple, we look at the case of two arcs of the same closed curve $c$ in a star, one delimited by two gates $A$ and $B$, the other delimited by two gates $A'$ and $B'$. As $c$ belongs to $\Omega^{pl}$, the two arcs do not cross, so they delimit a band in the star. If $\Phi^i_{loc}$ sends both arcs to the same side of $p_i$, then their images form two shortest paths in the same region and do not intersect. If $\Phi^i_{loc}$ sends the two arcs on opposite sides of $p_i$, a configuration where the two arcs cross twice is impossible because the angles $\theta_r(A,B)$ and $\theta_{r}(A',B')$ on the one hand, and $\theta_{\ell}(A,B)$ and $\theta_{\ell}(A',B')$ on the other hand are arranged in the same order. 

If $c$ is a quasigeodesic, each of its arcs possibly behaves in two ways in the star it crosses: either it reaches and leaves the vertex in a straight line from and up to the boundary of the star, forming on each side an angle at most $\pi$ in the convex case, or at least $\pi$ in the concave case. Or it connects its gates via a shortest path, entirely contained in the more acute of the two regions that it induces. In both cases, the previous process does not change its trajectory. So $\Phi$ fixes the quasigeodesics. Conversely, if $c$ is not a quasigeodesic, then either it does not take a shortest path through a face or in neighborhood of a transverse intersection with an edge, either it forms on the passage of a vertex an angle greater than $\pi $ on one side. This will be straightened when applying $\Phi^i_{loc}$ in a star containing that face, edge, or vertex in its interior, and therefore $c$ is not a fixed point of $\Phi^i_{loc}$. By construction, since $\Phi_{loc}^i$ does not increase lengths, we have that $L(\Phi(c))\leq L(c)$. Let us show that if $ L (\Phi (c)) = L (c) $, then $ c $ is a quasigeodesic. If an arc of $ c $ is not fixed by $ \Phi^i_{loc} $ in a star, while remaining on the same side of the vertex, then it loses length, because there is uniqueness of the shortest path within a (left or right) region of a star. On the other hand, if $ \Phi^i_{loc} $ passes an arc on the other side of the vertex (or pushes it against the vertex), it is because its length exceeds $ L ([Ap_i]) + L ([p_iB ]) $. So the arc loses at least this excess in length. Finally, since some $\Phi^i_{loc}$ decreases the length of a non-quasigeodesic $c$, such a $c$ cannot be a fixed point of $\Phi$.\qed \\

In this proof, we could have taken the simpler choice of always replacing an arc in a star by a shortest path, irrespective of the angle at the vertex. The more delicate choice that is made here is tailored so as to be able to extend $\Phi$ to sweep-outs in Lemma~\ref{lem:hatphi}.

The following property of the map $\Phi^i_{loc}$ will be useful.

\begin{lemma}\label{lem:ppty}
For all $\varepsilon>0$, there exists $\eta>0$ such that for any curve $c \in \Omega^{pl}$ and for any $i$, if $\Phi^i_{loc}(c) \neq 0$ and $L(c)-L(\Phi^i_{loc}(c))<\eta$, then $\Tilde{d}(c,\Phi^i_{loc}(c)):=\max_x d(x\in c,\Phi^i_{loc}(c))<\varepsilon$.
\end{lemma}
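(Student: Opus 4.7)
The plan is to prove a quantitative modulus of continuity of the form $\varepsilon(\eta)=O(\sqrt{M\eta})$ by case analysis on the action of $\Phi^i_{loc}$ on each arc. The core tool is the Euclidean ellipse estimate: in a flat region, any curve $\gamma$ from $A$ to $B$ with $L(\gamma)\leq |AB|+\eta$ lies in the ellipse with foci $A,B$ and sum $|AB|+\eta$, whose semi-minor axis is at most $\tfrac12\sqrt{\eta(2M+\eta)}=O(\sqrt{M\eta})$, so that $\tilde d(\gamma,[AB])=O(\sqrt{M\eta})$. Since $\Phi^i_{loc}$ is the identity outside the star $\mathcal C_i$ and acts independently on each arc $\gamma_j$ of $c$ in $\mathcal C_i$, we get the additive bounds $L(c)-L(\Phi^i_{loc}(c))=\sum_j(L(\gamma_j)-L(\Phi^i_{loc}(\gamma_j)))$ and $\tilde d(c,\Phi^i_{loc}(c))\leq \max_j \tilde d(\gamma_j,\Phi^i_{loc}(\gamma_j))$, so it suffices to prove the modulus on a single arc.

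I would then split into three cases for an arc $\gamma$ with gates $A,B$. (i) If $\Phi^i_{loc}(\gamma)$ is a shortest Euclidean path from $A$ to $B$ in a region of angle at most $\pi$ at $p_i$ containing $\gamma$, the ellipse estimate applies directly in the unfolding. (ii) If $\Phi^i_{loc}(\gamma)=[Ap_i]\cup[p_iB]$ and $\gamma$ passes through $p_i$, then splitting $\gamma$ at (an occurrence of) $p_i$ into $\gamma_A$ and $\gamma_B$ makes the excess $L(\gamma)-(|Ap_i|+|p_iB|)<\eta$ distribute additively among $L(\gamma_A)-|Ap_i|$ and $L(\gamma_B)-|p_iB|$, so the ellipse estimate applies to each half in the respective unfolded triangular sub-region.

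The main obstacle is case (iii): $\Phi^i_{loc}(\gamma)=[Ap_i]\cup[p_iB]$ yet $\gamma$ avoids $p_i$, which arises in the concave regime with $\theta_r,\theta_\ell\geq\pi$ when $\gamma$ lies entirely in a reflex sector, say $\mathcal C_i^r$ with $\theta:=\theta_r\geq\pi$. Here I would establish the geometric sub-claim that in the unfolded sector of angle $\theta\geq\pi$ at $p_i$, any curve from $A$ to $B$ (on opposite radial edges) that avoids the open disk of radius $\rho$ around $p_i$ has length at least $|Ap_i|+|p_iB|+\rho(\theta-2)$, the extremal configuration being a radial descent to the inner circle, an arc of angular length $\theta$ at radius $\rho$, and a radial ascent to $B$. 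Combined with $L(\gamma)<|Ap_i|+|p_iB|+\eta$, this forces some point $Q$ on $\gamma$ to lie within $\rho\leq \eta/(\theta-2)\leq \eta/(\pi-2)$ of $p_i$; splitting $\gamma$ at $Q$ and applying the ellipse estimate to each half (whose endpoints lie within $O(\eta)$ of $A,p_i$ and of $p_i,B$ respectively) yields $\tilde d(\gamma,[Ap_i]\cup[p_iB])=O(\sqrt{M\eta})$. The remaining mixed cases, in which $\Phi^i_{loc}$ transports $\gamma$ across $p_i$ between a reflex region and the opposite one (e.g.\ the convex case with $\theta_r>\pi$), reduce to (iii) via an auxiliary ellipse estimate in the target region showing that $L(\gamma)$ close to $L(\Phi^i_{loc}(\gamma))$ places $p_i$ within $O(\sqrt{M\eta})$ of the Euclidean target, so that $[Ap_i]\cup[p_iB]$ and $\Phi^i_{loc}(\gamma)$ are $O(\sqrt{M\eta})$-close. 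Taking $\eta$ small enough that $O(\sqrt{M\eta})<\varepsilon$, and the minimum across the finitely many stars, concludes the proof.
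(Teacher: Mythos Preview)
Your overall strategy---the ellipse/triangle estimate and a case split on how $\Phi^i_{loc}$ acts on an arc---is essentially the contrapositive of the paper's argument, and cases (i) and (ii) are fine. The gap is in case (iii).

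The geometric sub-claim is false. In the flat sector of angle $\theta\geq\pi$ at $p_i$, the length-minimising path from $A$ to $B$ that avoids the open disk of radius $\rho$ is \emph{not} the radial descent--arc--radial ascent; it is the tangent--arc--tangent path, of length
\[
\sqrt{a^2-\rho^2}+\sqrt{b^2-\rho^2}+\rho\bigl(\theta-\arccos(\rho/a)-\arccos(\rho/b)\bigr),
\]
where $a=|Ap_i|$, $b=|Bp_i|$. Differentiating at $\rho=0$ gives excess $(\theta-\pi)\rho+O(\rho^2)$, not $(\theta-2)\rho$. Concretely, with $\theta=\pi+10^{-2}$, $a=b=1$, $\rho=10^{-1}$, the true minimum is about $2.011$, well below your claimed lower bound $2+\rho(\theta-2)\approx 2.115$. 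So the inequality you use to force $\rho\leq\eta/(\pi-2)$ does not hold.

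Worse, the corrected excess $(\theta-\pi)\rho$ gives only $\rho\lesssim\eta/(\theta-\pi)$, which is not uniform: in a concave star of total angle $\Theta>2\pi$ with $\theta_r,\theta_\ell\geq\pi$, the gates $A,B$ can be placed so that $\theta_r$ is arbitrarily close to $\pi$ (with $\theta_\ell=\Theta-\theta_r$), and an arc $\gamma$ can sit in $\mathcal C_i^r$. Then your splitting-at-$Q$ step yields a bound of order $\sqrt{M\eta}+\eta/(\theta-\pi)$, which does not go to $0$ with $\eta$ uniformly in the arc. The same defect is inherited by your ``mixed case'', since you reduce it to (iii).

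The paper avoids this by treating the concave star as a CAT(0) space and invoking triangle comparison: for any point $E$ on $\gamma$ with $d(E,[Ap_i]\cup[p_iB])\geq\varepsilon$, the comparison triangle $\bar A\bar E\bar B$ in $\mathbb R^2$ has the same side lengths and altitude at $\bar E$ at least $\varepsilon$, whence the Euclidean excess $|\bar A\bar E|+|\bar E\bar B|-|\bar A\bar B|\geq c\,\varepsilon^2/D$ with $c$ absolute. This is exactly the uniform estimate your case (iii) needs; if you replace your disk-avoidance sub-claim by this CAT(0) comparison (or, equivalently, by a direct Euclidean-triangle estimate applied to the comparison triangle), the rest of your plan goes through with the same $O(\sqrt{M\eta})$ modulus.
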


\begin{proof}
  
  Let us assume that there exists $\varepsilon>0$ such that for all $n$ in $\mathbb{N}$, there exists $c_n\in \Omega^{pl}$ and $i=i(n)$ such that both $L(c_n)-L(\Phi^i_{loc}(c_n))<1/n$ and $\Tilde{d}(c_n,\Phi^i_{loc}(c_n))>\varepsilon$. Then one of the points of $c_N$ -- that we note $E$ in the following -- is at a distance at least $\varepsilon$ from $\Phi^i_{loc}(c_N)$. Take $N>\sigma(\varepsilon)$, with $\sigma(\varepsilon)=16D^2/\varepsilon^2$, where $D$ is the largest diameter of all the stars. We consider an arc of $c_N$ in $\mathcal{C}_i$ -- which we will still call $c_N$ -- between two gates $A$ et $B$ fixed by $\Phi^i_{loc}$. We distinguish two cases.\\
  
  In the first case, between $ A $ and $ B $, $\Phi^i_{loc}(c_N)$ is a Euclidean straight line in $\mathcal{C}_i$, noted $[AB]$. It is clear that $E\in \mathcal{C}_i$ and that its distance to $[AB]$ is still at least $\varepsilon$. Then the length of $c_N$ is at least the length of the shortest paths between $A$ and $B$ passing by $E$, by staying on the same side of $p_i$ as $c_N$ (without loss of generality, $[AB]\subset \mathcal{C}_i^r(A,B)$). This shortest path, which we denote by $Av^-(c_N)$ can be decomposed into $AE+BE$, although AE and EB are not necessarily Euclidean segments (they can possibly be broken lines going through $p_i$ or other points of $\partial \mathcal{C}_i$). Therefore, the length loss between $c_N$ and $\Phi^i_{loc}(c_N)$ can be lower bounded by the length loss between $Av^-(c_N)$ and $[AB]$, i.e. $AE+BE-AB$. We then reduce this case to the situation where $ABE$ is an Euclidean triangle, as follows. On one hand, if the star is convex, by the rules defining $\Phi^i_{loc}$, two situations occur :
  \begin{itemize}
    \item Either $Av^-(c_N)$ and $[AB]$ are located on the same side of  $p_i$. If they do not form a Euclidean triangle (whose altitude\footnote{In this proof, we call \textit{altitude} the distance from $E$ to $[AB]$. If the triangle $AEB$ has an obtuse angle at its base, then this notion of altitude does not coincide with the usual notion, i.e. the distance from $E$ to $(AB)$.} from $E$ is at least $\varepsilon$), we expand the star flat by cutting it along a ray that does not pass through $Av^-(c_N)$ and replace the folded parts of $Av^-(c_N)$, on either side of $E$, by $[AE]$ or $[BE]$, which shortens it further and allows us to reason in a Euclidean triangle of base $[AB]$ and altitude $a>\varepsilon$ (see Figure \ref{lemme3} below).
 \begin{figure}[ht]
		\begin{center}
			\includegraphics[width=0.7\textwidth]{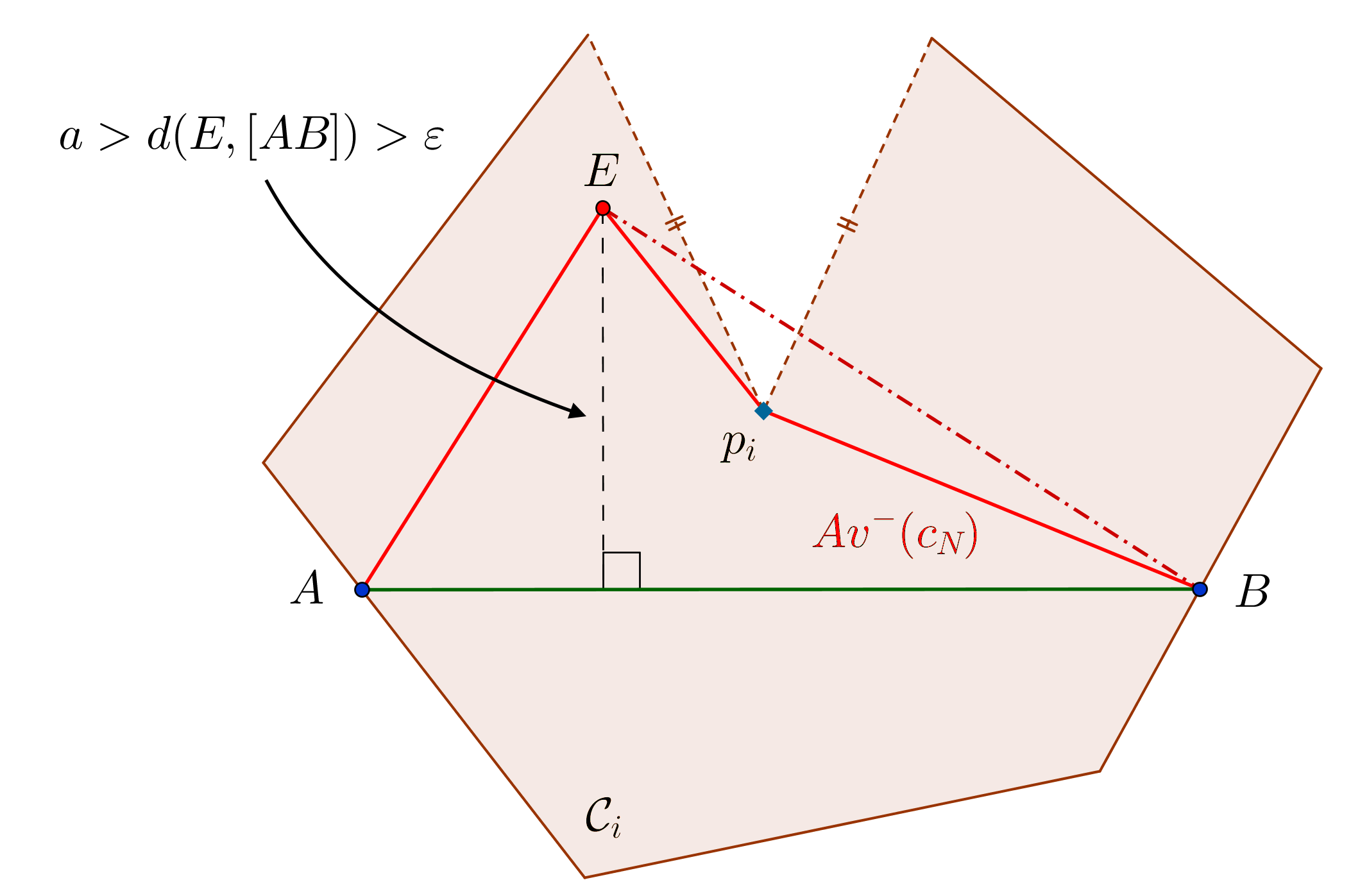} 
			\caption{When $c_N$ and $\Phi^i_{loc}(c_N)$ pass on the same side of $p_i$.}
			\label{lemme3}
		\end{center}
\end{figure}
    
    \item Or $p_i$ is located between $Av^-(c_N)$ and $[AB]$. This only happens if $\theta_l(A,B)$ is greater than $\pi$\footnote{This angle is the key to the reasoning. It prevents $c_N$ and $Av^-(c_N)$ from being arranged as a rhombus, i.e. from remaining distant from each other, with the same length, which should not happen. Indeed, this angle greater than $\pi$ ensures that the curves rather form a "boomerang" between $A$ and $B$.}.  Then we have $AB\leq Ap_i+Bp_i \leq L(Av^-(c_N))$. If $E\in \mathcal{C}_i^r(A,B)$, we replace $Av^-(c_N)$ by $[AE]\cup[BE]$ in this region, which is an even shorter path and brings us back to the previous point. Assume that $E\in \mathcal{C}_i^l(A,B)$ (see Figure \ref{lemme3bis} below). If the distance between $E$ and $Ap_iB$ is greater than $\varepsilon/2$, we draw in $\mathcal{C}_i^l(A,B)$ an avatar $Av^+([AB])$ longer than $[AB]$, to form a triangle of altitude $a_1>\varepsilon/2$ from which the loss of length between $c_N$ and $\Phi^i_{loc}(c_N)$ can be controlled. Otherwise, the distance between $p_i$ and $[AB]$ is necessarily greater than $\varepsilon/2$ and we reason in the same way in the triangle $Ap_iB$ of altitude $a_2>\varepsilon/2$.
    \begin{figure}[ht]
		\begin{center}
			\includegraphics[width=\textwidth]{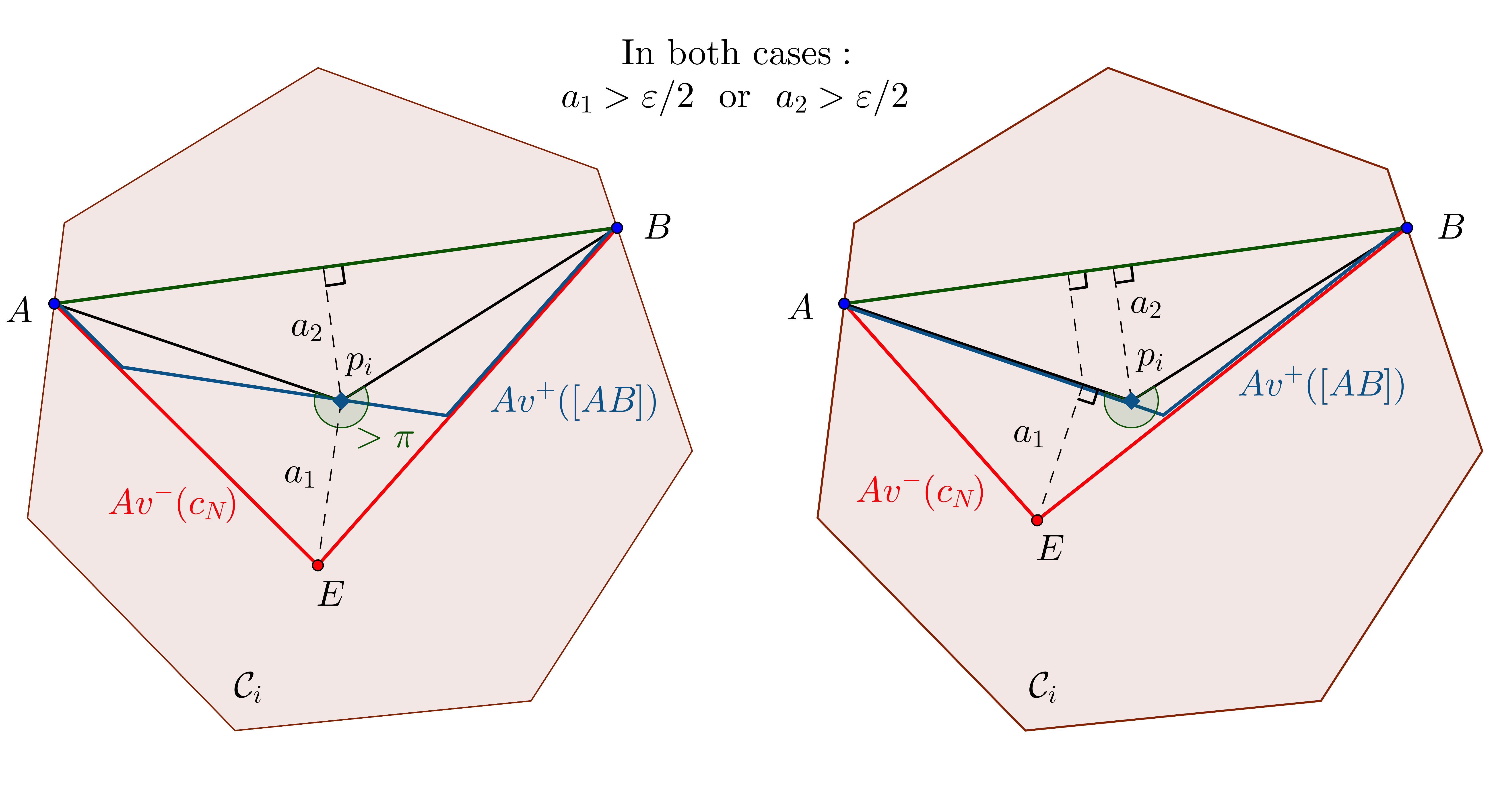} 
			\caption{When $c_N$ and $\Phi^i_{loc}(c_N)$ surround $p_i$.}
			\label{lemme3bis}
		\end{center}
\end{figure}
  \end{itemize} 
  
  \noindent On the other hand, if the star is concave, while the shortest paths $AE$ and $BE$ might be broken lines going through the vertex or the vertex might be inside the triangle, the length loss is greater than it would be in a Euclidean triangle of altitude $a>\varepsilon$ (this is a general fact for metrics of nonpositive curvature, see for example~\cite[Theorem~2.3.3]{jost}). Finally, in the Euclidean case, one can easily check that this length loss is minimized when $E$ is in $E_0$ on Figure~\ref{prop1bis}, where it is at least $\varepsilon^2/16D^2$ for $\varepsilon$ small enough. Therefore, for our choice of $\sigma$ we obtain $L(c_N)-L(\Phi^i_{loc}(c_N))>1/N$, concluding the proof.

\begin{figure}[ht]
		\begin{center}
			\includegraphics[width=0.8\textwidth]{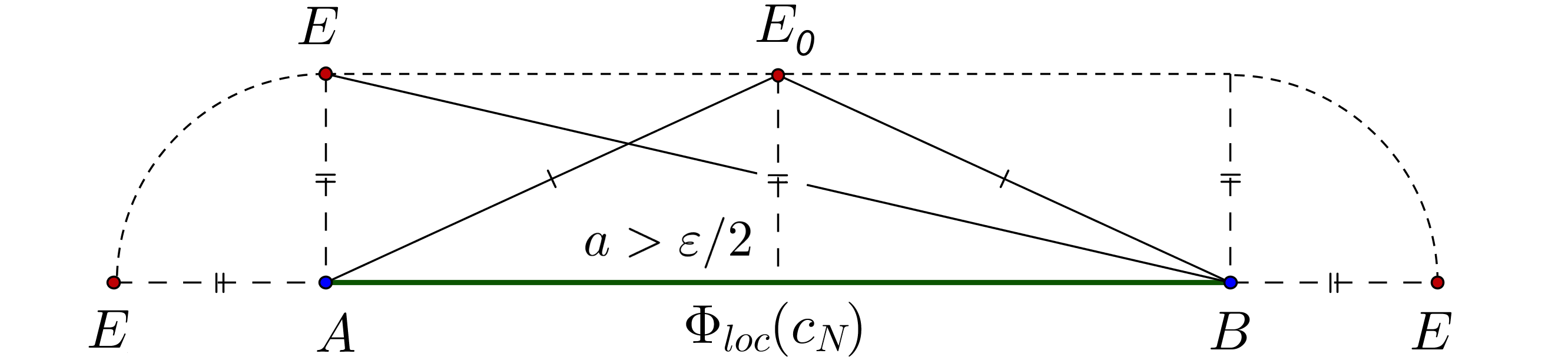} 
			\caption{When controlling a loss of length under the action of $\Phi^i_{loc}$, we control the distance beetwen the curves.}
			\label{prop1bis}
		\end{center}
\end{figure}


In the second case, between $A$ and $B$, $\Phi^i_{loc}(c_n)$ goes along the boundary of $\mathcal{C}_p$, or is a quasigeodesic which passes through $p$. We can then come back to the first case by considering a curve $Av^+(\Phi^i_{loc}(c_n))$ longer than $\Phi^i_{loc}(c_n)$, as pictured in Figure~\ref{property}.
\end{proof}
\begin{figure}[h!]
		\begin{center} \includegraphics[width=0.6 \textwidth]{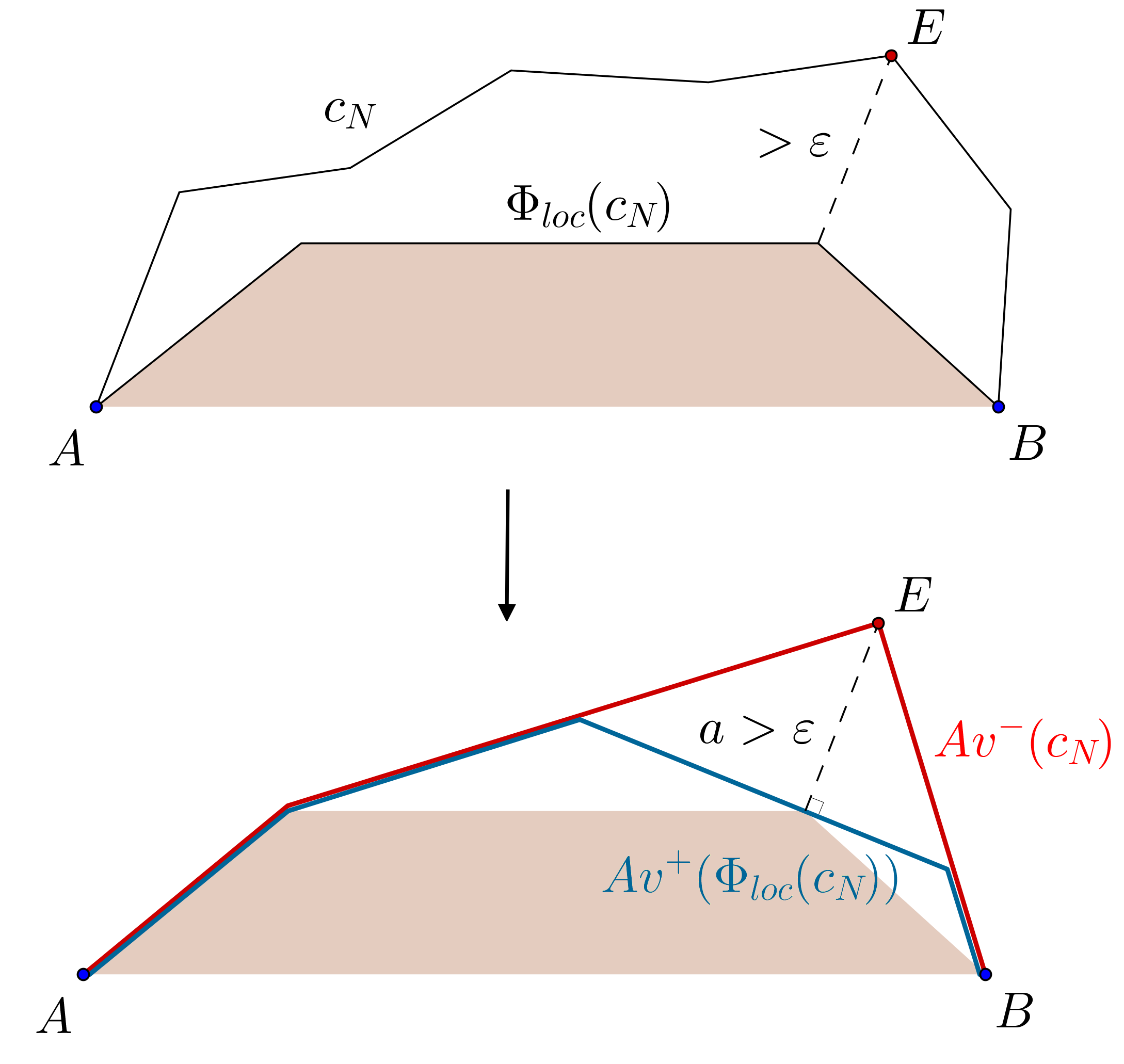}
        \caption{When $\Phi^i_{loc}(c_n)$ goes along the boundary of $\mathcal{C}_p$.}
			\label{property}
		\end{center}
\end{figure}

The following lemma shows that applying $\Phi$ iteratively to a curve either makes the curve trivial in finite time, or converges to a quasigeodesic. Note that the lemma is not as obvious as it might seem as $\Phi$ is not continuous on $\Omega^{pl}$.

\begin{lemma}\label{lem:quasigeod}
Let $c\in \Omega^{pl}$. We consider the sequence of iterates of $\Phi$, i.e., $(\Phi^j(c))_j$. If this sequence does not reach $0$ in finite time, then it admits a subsequence converging to a quasigeodesic (with respect to the uniform convergence metric).
\end{lemma}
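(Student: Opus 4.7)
The plan is to combine the length-monotonicity from Lemma~\ref{lem:quasigeod}'s ``parent'' result (the previous lemma on $\Phi$) with the quantitative stability of Lemma~\ref{lem:ppty} and the compactness of $\Omega$ for the uniform convergence metric.

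First I would set up a telescoping argument on lengths. Since lengths are non-increasing under $\Phi$ and bounded below by $0$, the sequence $(L(\Phi^j(c)))_{j\geq 0}$ converges to some $L_\infty \geq 0$. Decomposing $\Phi = \Phi^n_{loc}\circ\cdots\circ \Phi^1_{loc}$, define the intermediate iterates $c_{j,0}:=\Phi^j(c)$ and $c_{j,i}:=\Phi^i_{loc}(c_{j,i-1})$, so that $c_{j,n}=\Phi^{j+1}(c)$. Since each $\Phi^i_{loc}$ is length non-increasing and the endpoints $L(c_{j,0})$ and $L(c_{j,n})$ both converge to $L_\infty$, the squeeze gives $L(c_{j,i}) \to L_\infty$ for every $i$, and in particular $L(c_{j,i-1})-L(c_{j,i})\to 0$. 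Because $\Phi^i_{loc}$ maps a constant curve to a constant curve, the hypothesis that $\Phi^j(c)\neq 0$ propagates to every $c_{j,i}\neq 0$, so Lemma~\ref{lem:ppty} applies and yields $\tilde d(c_{j,i-1},c_{j,i})\to 0$ for each $i$. In particular consecutive iterates of $\Phi$ become arbitrarily close.

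Next I would invoke compactness. The space $\Omega$ is compact in the uniform convergence metric by the Arzelà–Ascoli argument recalled in Section~\ref{S:prelim}, so the sequence $\Phi^j(c)$ admits a subsequence $\Phi^{j_k}(c)\to c_\infty \in \Omega$. By a diagonal extraction, I may further assume that each intermediate sequence $c_{j_k,i}$ converges uniformly to some limit $c_\infty^{(i)}$; the asymptotic closeness from the previous paragraph forces all these limits to share the same image as $c_\infty$, and after matching starting points and orientations they can be taken equal to $c_\infty$.

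The heart of the proof is then to show that $c_\infty$ is a quasigeodesic. I would argue by contradiction: if $c_\infty$ fails the quasigeodesic condition, then there is a star $\mathcal{C}_i$ in which either an arc of $c_\infty$ is strictly longer than the prescribed straightening, or $c_\infty$ crosses $p_i$ at an angle forbidden by the definition. In either case, a direct application of the shortening rules of $\Phi^i_{loc}$ to $c_\infty$ decreases its length by a definite amount $\delta>0$. Using the uniform convergence $c_{j_k,i-1}\to c_\infty$, the gates of $c_{j_k,i-1}$ on $\partial \mathcal{C}_i$, and consequently the positions and angles relevant to $\Phi^i_{loc}$, converge to those of $c_\infty$; by continuity of Euclidean shortest-path length with respect to its endpoints and of vertex angles with respect to the incoming rays, we obtain $L(c_{j_k,i-1})-L(c_{j_k,i})\geq \delta/2$ for $k$ large, contradicting the vanishing of the per-step length loss established in the first paragraph.

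The main obstacle is that $\Phi^i_{loc}$ is not continuous: under small uniform perturbations, the combinatorics of gates on $\partial \mathcal{C}_i$ can change (a tangency becomes a crossing, or an arc switches sides of $p_i$ under the rules of the lemma preceding Lemma~\ref{lem:ppty}). The delicate point is to show that, for the specific arc of $c_\infty$ producing the $\delta$-improvement, the corresponding arcs of $c_{j_k,i-1}$ are eventually forced into a combinatorial configuration that yields at least the same order of length loss. I expect this to be handled by the same kind of geometric case analysis used in the proof of Lemma~\ref{lem:ppty} (Figures~\ref{lemme3}--\ref{prop1bis}): either the gate configurations stabilize and the classical Euclidean-triangle lower bound applies, or a boundary tangency is perturbed to a small crossing, in which case the straightened arc is still shorter by almost $\delta$.
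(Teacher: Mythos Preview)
Your approach is essentially the paper's: compactness of $\Omega$, Lemma~\ref{lem:ppty} to control the per-star length loss, and a contradiction from a definite $\delta$-shortening at a non-quasigeodesic point. Two differences are worth noting.

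First, the paper inserts a short but useful structural step that you skip: for $j\geq 1$, each $\Phi^j(c)$ is \emph{linear in the interior of every triangle} (this is immediate from the construction of $\Phi^i_{loc}$, which replaces arcs by shortest paths within stars). Linearity on a fixed triangle is a closed condition under uniform convergence, so $c_\infty$ is automatically piecewise-linear and any failure of quasigeodesicity is localized to an edge or a vertex. This lets the paper run the contradiction argument only in the stars $\mathcal{C}_{i_1}$ containing the bad edge/vertex, avoiding the general ``arc strictly longer than its straightening'' scenario you set up.

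Second, your step ``all the intermediate limits $c_\infty^{(i)}$ share the same image as $c_\infty$, hence can be taken equal to $c_\infty$'' is not justified by Lemma~\ref{lem:ppty}: the quantity $\tilde d(c_1,c_2)=\max_{x\in c_1}d(x,c_2)$ is one-sided, so $\tilde d(c_{j,i-1},c_{j,i})\to 0$ only yields $\mathrm{Im}(c_\infty^{(i-1)})\subseteq\mathrm{Im}(c_\infty^{(i)})$, not equality, and certainly not equality of the parametrized curves. The paper does not make this claim; it keeps working with the original subsequence $c_k\to c_\infty$, uses the chained $\tilde d$-estimate $\tilde d(c_k,\Phi_{i_1-1}(c_k))<n\varepsilon$ directly, and then argues concretely (separately for $G$ interior to an edge and $G=p_{i_1}$) that any curve whose image is $n\varepsilon$-close to $c_\infty$ near $G$ must lose at least $\mu/2$ in length when $\Phi^{i_1}_{loc}$ acts. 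Your final paragraph correctly identifies this as the delicate point, but you should replace the ``intermediate limits equal $c_\infty$'' shortcut with that direct argument.
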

\begin{proof}
  Suppose that $\Phi^j(c) \neq 0$ for all $j$. In particular, no curve $\Phi^j(c)$ is strictly contained in a star. Since $\Omega$ is compact, $(\Phi^j(c))_j$ admits a convergent subsequence $(c_k)=(\Phi^{j(k)}(c))_k$ convering to a curve $c_\infty \in \Omega$, which is non-trivial and not contained in a star. Let us assume first that $c_\infty$ is not a quasigeodesic in the neighborhood $V_G \subseteq c_{\infty}$ of a point $G \in S$ contained in the interior of a triangle $T$. This means that three points of $V_G$ are not aligned in $T$, and thus that for $k$ big enough, an arc of $c_k$ also admits three non-aligned points in $T$. However, by construction, the image of a curve of $\Omega^{pl}$ under $\Phi$ is linear in the interior of each triangle of $S$. Thus we reach a contradiction, and thus $c_\infty$ is quasigeodesic in the neighborhood of every point outside of the edges and vertices of $S$. In particular, $c_{\infty} \in \Omega^{pl}$.

Now, let us assume that a non-quasigeodesic point $G$ is contained on an edge $e=(p_{i_1},p_{i_2})$ of $S$, adjacent to two triangles $T_3$ and $T_4$, and without loss of generality we can assume that $G$ is in the interior of $e$ or $G=p_{i_1}$. For $\varepsilon$ to be chosen later, let $\eta=\eta(\varepsilon)$ the length difference given by Lemma~\ref{lem:ppty}. Since the sequence $L(c_k)$ is non-increasing, for $k$ big enough we have $L(c_k)-L(c_{k+1})<\eta$. Furthermore, we also have, for $i=1,\dots,n-1$, $L(\Phi_i(c_k))-L(\Phi_{i+1}(c_k))<\eta$, where $\Phi_\ell$ is the concatenation of the first $\ell$ actions $\Phi^i_{loc}$ on the first $\ell$ stars. By Lemma~\ref{lem:ppty} and the triangle inequality, we thus have $\tilde{d}(c_k,\Phi_{i_1-1}(c_k))<n\varepsilon$. If we replace the connected component of $c_{\infty} \cap (T_3 \cup T_4)$ containing $V_g$ by a shortest path between its endpoints its length decreases by some $\mu >0$. We claim that for $\varepsilon$ smaller than some $\varepsilon_1$ and $k$ big enough, $\Phi_{i_1-1}(c_k)$ is close enough to $c_k$, which itself is close enough to $c_\infty$, so that $\Phi^{i_1}_{loc}$ reduces the length of $\Phi_{i_1-1}(c_k)$ by at least $\mu/2$, i.e., $L(c_k)-L(\Phi_{i_1}(c_k))>\mu/2$. Indeed, if $G$ is in the interior of $e$, a curve close to $c_{\infty}$ stays disjoint from a vertex in $T_3 \cup T_4$, and thus straightening this curve in $\mathcal{C}_{i_1}$ reduces its length by at least the same amount as in $T_3 \cup T_4$. If $G=p_{i_1}$, then any arc close enough to $c_{\infty}$ in $\mathcal{C}_{i_1}$ will have gates inducing a wrong angle at the vertex (since $c_\infty$ does), and thus $\Phi_{loc}$ replaces this arc by a shortest path, away from $p_{i_1}$, and here again the length loss is at least $\mu_2$. Finally, for $\varepsilon$ smaller than some $\varepsilon_2$, we have $\eta < \mu_2$. Taking $\varepsilon < \min(\varepsilon_1,\varepsilon_2)$, we reach a contradiction. We conclude that $c_{\infty}$ is a quasigeodesic.
\end{proof}

We now explain how to apply the disk flow to a monotone sweep-out, so that it extends the action on each of the fibers.

\begin{lemma}\label{lem:hatphi}
The map $\hat{\Phi}:\mathcal{B} \longrightarrow \mathcal{B}$ whose construction we give below is provided with a piecewise continuous injective map $\iota:[0,1]\longrightarrow [0,1]$, such that \[\forall s\in [0,1],\hat{\Phi}(\beta)(\iota(s),\cdot)=\Phi(\beta(s,\cdot)).\] The map $\iota$ induces a surjection $f$ that maps $[0,1]$ on to $[0,1]$, which continually extends $\iota^{-1}$, with the property that $L(\hat{\Phi}(\beta)(s,\cdot))\leq L(\beta(f(s),\cdot))$, with equality if and only if $\beta(f(s),\cdot)$ is a quasigeodesic.
\end{lemma}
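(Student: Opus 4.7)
The plan is to mirror the composition structure of $\Phi$ itself: I build $\hat{\Phi}^i_{loc}$ star by star, together with associated maps $\iota_i$ and $f_i$, and then set $\hat{\Phi} := \hat{\Phi}^n_{loc}\circ\cdots\circ \hat{\Phi}^1_{loc}$ with $\iota$ and $f$ obtained by composition/reparameterization. So it is enough to treat a single star $\mathcal{C}_i$.

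First I would locate the finite set $J \subset (0,1)$ of parameters at which $s \mapsto \Phi^i_{loc}(\beta(s,\cdot))$ is discontinuous. Inspection of the three rules defining $\Phi^i_{loc}$ in both the convex and concave case shows these are exactly the parameters where the combinatorics of the arcs of $\beta(s,\cdot)$ in $\mathcal{C}_i$ changes: an arc enters or leaves $\mathcal{C}_i$, two consecutive gates merge, or one of the angles $\theta_r(A,B),\theta_\ell(A,B)$ crosses $\pi$. Since $\beta$ is monotone and the fibers are piecewise-linear, each such event involves only finitely many arcs and occurs at finitely many parameters. At each $s_0 \in J$ the two one-sided limits $\Phi^i_{loc}(\beta(s_0^\pm,\cdot))$ differ on a prescribed collection of arcs, but both limit arcs lie, together with the corresponding arc of $\beta(s_0,\cdot)$ itself, in a common weakly simple bigon inside $\mathcal{C}_i$, and both have length at most $L(\beta(s_0,\cdot))$.

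At each $s_0 \in J$, I insert an interpolation interval $I_{s_0}$ carrying a continuous one-parameter family of weakly simple piecewise-linear curves joining $\Phi^i_{loc}(\beta(s_0^-,\cdot))$ to $\Phi^i_{loc}(\beta(s_0^+,\cdot))$. On the affected arcs, the interpolation is constructed by hand from the geometry of the star: when the jump corresponds to an arc flipping from one region to the other (angle crossing $\pi$), I homotope through the broken geodesic $[Ap_i]\cup[p_iB]$, which realizes the equality case and dominates the lengths of both sides; when the jump corresponds to a gate appearing or a pair of gates merging, I slide the gate along $\partial \mathcal{C}_i$ through the transition point, straightening continuously. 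On the unaffected arcs, I let the interpolation be constant equal to $\Phi^i_{loc}(\beta(s_0,\cdot))$ there. By construction every interpolating curve stays weakly simple, is piecewise-linear, is bounded in length by $L(\beta(s_0,\cdot))$, and the disks it bounds to the left remain nested with the neighboring $D_s$, preserving monotonicity of the sweep-out. I then define the new parameter space by concatenating $[0,1]\setminus J$ with the intervals $I_{s_0}$ glued at $s_0$ and rescaling; $\iota_i$ is the tautological injection of $[0,1]\setminus J$ into this reparameterization, and $f_i$ is the continuous extension of $\iota_i^{-1}$ defined to be constant equal to $s_0$ on each inserted $I_{s_0}$.

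With this setup the identity $\hat{\Phi}^i_{loc}(\beta)(\iota_i(s),\cdot)=\Phi^i_{loc}(\beta(s,\cdot))$ holds by construction, and the length bound $L(\hat{\Phi}^i_{loc}(\beta)(s,\cdot))\le L(\beta(f_i(s),\cdot))$ follows on the image of $\iota_i$ from the length-nonincreasing property of $\Phi^i_{loc}$ and on inserted intervals from the explicit construction. The equality case implies that $s$ lies in the image of $\iota_i$ with $\beta(f_i(s),\cdot)$ a fixed point of $\Phi^i_{loc}$, and after composing all $n$ stars the equality case on $\hat{\Phi}$ forces $\beta(f(s),\cdot)$ to be fixed by every $\Phi^i_{loc}$, i.e., a quasigeodesic. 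The main obstacle is the construction of the interpolating families at each $s_0 \in J$: one must produce a homotopy that is simultaneously weakly simple, length-controlled by $L(\beta(s_0,\cdot))$, and compatible with the nesting of disks required by monotonicity. The delicate case is the flip across a vertex when $\theta=\pi$, where the "unstraightening" through $[Ap_i]\cup[p_iB]$ must be carried out coherently with any other arcs of the curve that are being modified at the same parameter, and it is there that the specific choices made in the definition of $\Phi^i_{loc}$ (keeping an arc through the vertex whenever both angles are at most $\pi$, rather than always taking a shortest path) become essential, as foreshadowed in the remark following the construction of $\Phi$.
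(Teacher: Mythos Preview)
Your overall architecture is the paper's: build $\hat{\Phi}$ star by star, detect the finitely many discontinuity parameters of $s\mapsto\Phi^i_{loc}(\beta(s,\cdot))$, blow each one up into an interval carrying an interpolating family, and set $\iota$, $f$ accordingly. Two genuine gaps remain.

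First, you do not handle the case where a fiber is entirely contained in the interior of $\mathcal{C}_i$ and is sent to a point by $\Phi^i_{loc}$. This is not a ``gate merging'' or ``angle crossing $\pi$'' event; it requires an interpolation of a different nature, namely a family of weakly simple closed curves shrinking the last surviving fiber down to a point inside the star. In the convex case this is delicate because of the non-uniqueness of shortest paths across the vertex, and the paper needs a two-stage construction (sweeping the star by shortcut curves based at a fixed boundary point, then a secondary fill near $p_i$). Your case analysis simply omits this.

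Second, your interpolating families are described as homotopies \emph{between the two one-sided limits} $\Phi^i_{loc}(\beta(s_0^\pm,\cdot))$, e.g.\ ``homotope through the broken geodesic $[Ap_i]\cup[p_iB]$''. But the inequality you must establish is against $L(\beta(s_0,\cdot))$, not against the lengths of the two limits, and a homotopy between two short curves can easily pass through curves longer than the original fiber. The paper avoids this by never interpolating between the two images: every interpolating arc is obtained by \emph{shortcutting the original arc of $\beta(s_0,\cdot)$}, taking a shortest path from a gate to a moving point on $\Phi^i_{loc}(\gamma_{s_0})$ and then following the remainder of $\gamma_{s_0}$. This makes the bound $L(\text{gap curve})\le L(\beta(s_0,\cdot))$ automatic and simultaneously delivers weak simplicity and nesting. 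Your sketch does not provide this mechanism, and in particular your flip interpolation through $[Ap_i]\cup[p_iB]$ has no reason to stay below $L(\beta(s_0,\cdot))$ at intermediate times. The paper also needs a preliminary ``artificial thickening'' pass that manufactures gates along bare portions of $\partial\mathcal{C}_i$ before applying $\Phi^i_{loc}$; your ``slide the gate along $\partial\mathcal{C}_i$'' gestures at this but does not separate it out, and without it some of the discontinuities (an arc tangent to $\partial\mathcal{C}_i$ with no gate on one side) are not covered by your two listed cases.
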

  \noindent \textit{Construction and proof. }Let $\beta$ be a sweep-out in $\mathcal{B}$. We explain how to apply a local step $\hat{\Phi}^i_{loc}$ of the curve-shortening process to $\beta$. Then, as before, we will define $\hat{\Phi}$ as the concatenation $\circ_{i=1}^n \hat{\Phi}^i_{loc}$.

Before analyzing the effect of $\Phi^i_{loc}$ on $\beta$, we apply an artificial thickening of $\beta$ which fills its ``problematic’’ portions on the boundary of each star and is defined as follows.
  We call the \emph{bare boundary} of $\mathcal{C}_i$ the set of points of $\partial \mathcal{C}_i$ which are not the gates of any arc of a fiber of $\beta$ crossing $\mathcal{C}_i$.   Consider a connected component of the bare boundary of a certain star $\mathcal{C}_i$. It is fully contained in the image of at least\footnote{If there is an infinite number of them, they are parameterized in $ \beta $ by a closed interval. We then consider the representative closest to the interior of the star.} one fiber $c$ of $\beta$ that:
\begin{itemize}
\item{either connects two gates which are neither a front gate nor an exit gate,}
\item{or it connects a front or exit gate on one side only (see the green curve on Figure~\ref{Bareedge}, top left),}
\item{or it does not connect any gate (see the green curve on Figure~\ref{Bareedge}, top right).}
\end{itemize}

In all three cases, we can see that applying $\Phi^i_{loc}$ would induce a discontinuity around $c$. This is pictured in Figure~\ref{Bareedge}, where one sees that the action of $\Phi^i_{loc}$ on the red curve and the green curve would be very different, despite them being arbitrarily close. We handle this discontinuity as follows. Case 1 will fit into the more general surgery described below, and thus is not addressed at this stage. In cases 2 and 3, the idea is to replace the parameter $s$ of $c=\beta(s,\cdot)$ by a closed interval describing a collection of copies of $c$ all identical (hence the artificial nature of this thickening), except that we drag artificially the position of the single extremal gate (case 2) or we add two new front/exit gates (case 3), one of which moves along $\partial \mathcal{C}_i$. In both cases, the new gates keep or gain an open character to the right or to the left. The aim of this operation is that the arcs of $c$ between these new artificial gates will become straightened by $\Phi^i_{loc}$, thus ensuring the continuity of $\Phi^i_{loc}$ at $c$ (see Figure~\ref{Bareedge}). 

\begin{figure}[h!]
		\begin{center}
			\includegraphics[width=\textwidth]{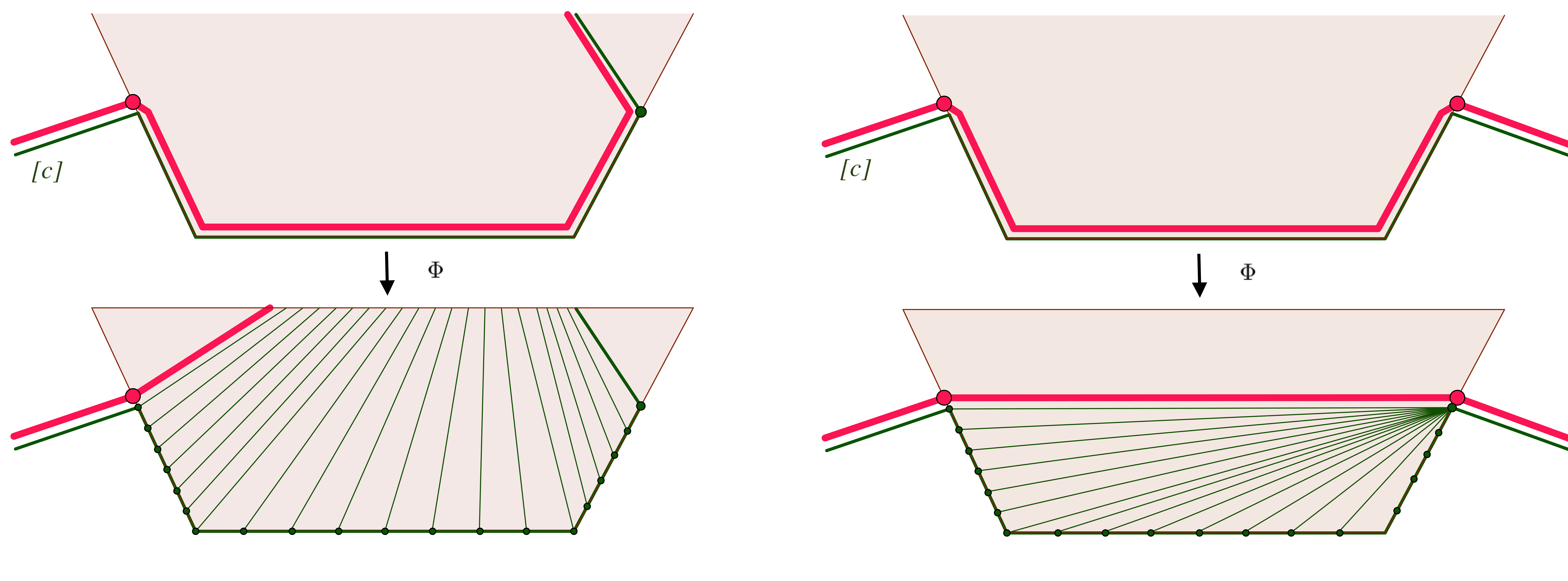} 
			\caption{We artificially add gates on bare edges to obtain interpolating curves in their neighborhoods.}
			\label{Bareedge}
		\end{center}
\end{figure}

After this pre-processing, we consider the map $\beta':[0,1]\times \mathbb{S}^1\longrightarrow S$ defined by: 
\[\forall s\in [0,1],\beta'(s,\cdot)=\Phi^i_{loc}(\beta(s,\cdot)).\]
The discontinuity of $\Phi^i_{loc} $ on arcs within the star $\mathcal{C}_i$ induces a finite number of tears in $\beta'$.  Let us make the exhaustive list of the situations where these tears take place and repair them.

\begin{itemize}
\item{Disappearance of one or more gates far from the vertex: consider a closed interval of fibers, intersecting $\mathcal{C}_p$, such that $ \Phi^i_{loc} $ is sending all the fibers on the same side of $p_i$ and that are parameterized by an interval $I=[s_0,s_0+\varepsilon]\subset \mathbb{S}^1$. We denote by $\gamma_s,s\in I$ the corresponding arcs, relatively to $\mathcal{C}_i$ and we treat the case where $\gamma_{s_0}$ has more gates than all of $\gamma_s,s\in I \setminus \{s_0\}$. In that case, $\Phi_{loc}^i$ might be discontinuous on $s_0$ and we say that we opened a \emph{breach} between two gates of $\gamma_{s_0}$, as pictured in the two examples of Figure~\ref{breach1}. In order to interpolate in this breach, at $\gamma_{s_0}$ we introduce a collection of arcs $\text{gap}(\gamma_{s_0})$ parameterized by $t \in [0,1]$, such that $\text{gap}(\gamma_{s_0})(0)=\Phi_{loc}^i(\gamma_{s_0})$ and which interpolates between $\Phi^i_{loc}(\gamma_{s_0})$ and the subsequent continuous family of arcs $s\in(s_0,s_0+\varepsilon]\longmapsto \Phi^i_{loc}(\gamma_s)$. An arc of $\text{gap}(\gamma_{s_0})(t)$ is defined by taking a shortest path between the gate that opens the breach and a point $p(t)$ on the boundary of the breach and then following the rest of $\gamma_{s_0}$ until the gate opening the breach, as pictured in Figure~\ref{breach1}. Note that thus constructed, all the interpolating arcs in $\text{gap}(\gamma_{s_0})(t)$ have a length strictly smaller than that of $\gamma_{s_0}$.}

\begin{figure}[h!]
		\begin{center}
		  \includegraphics[width=\textwidth]{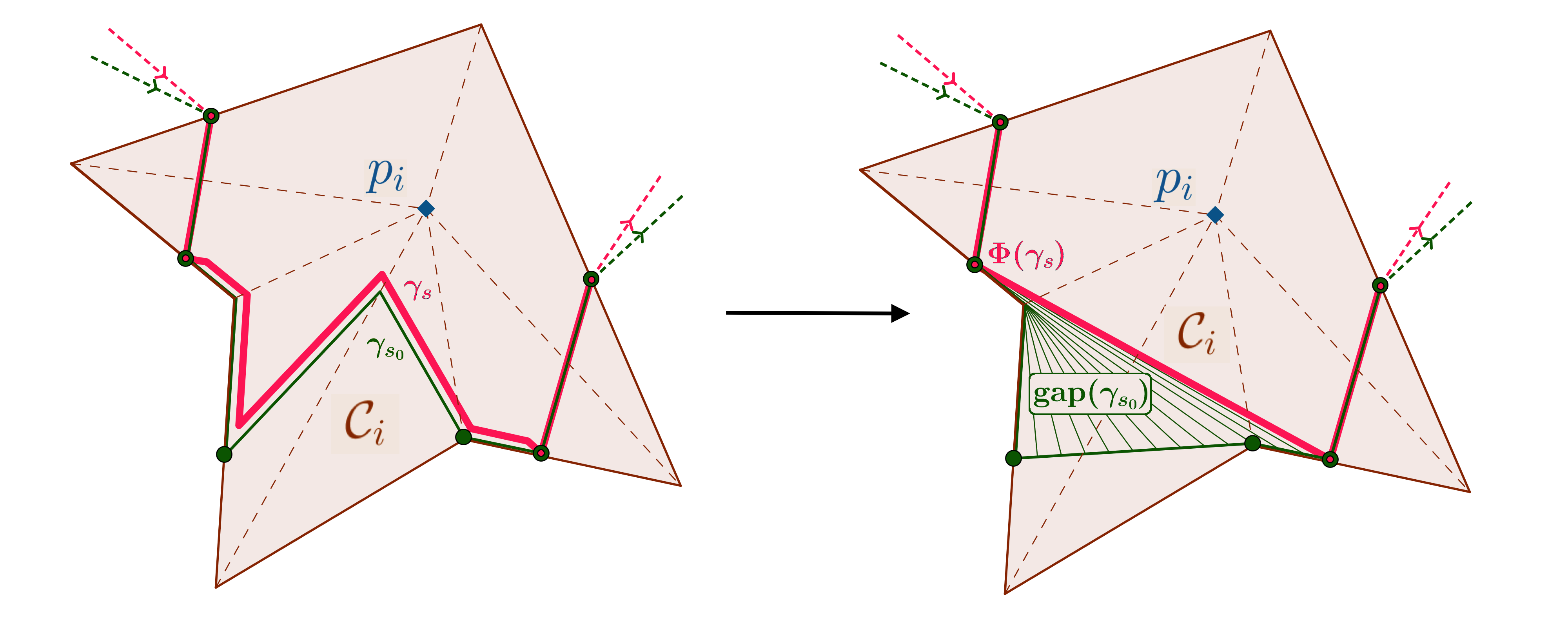}
                  \includegraphics[width=\textwidth]{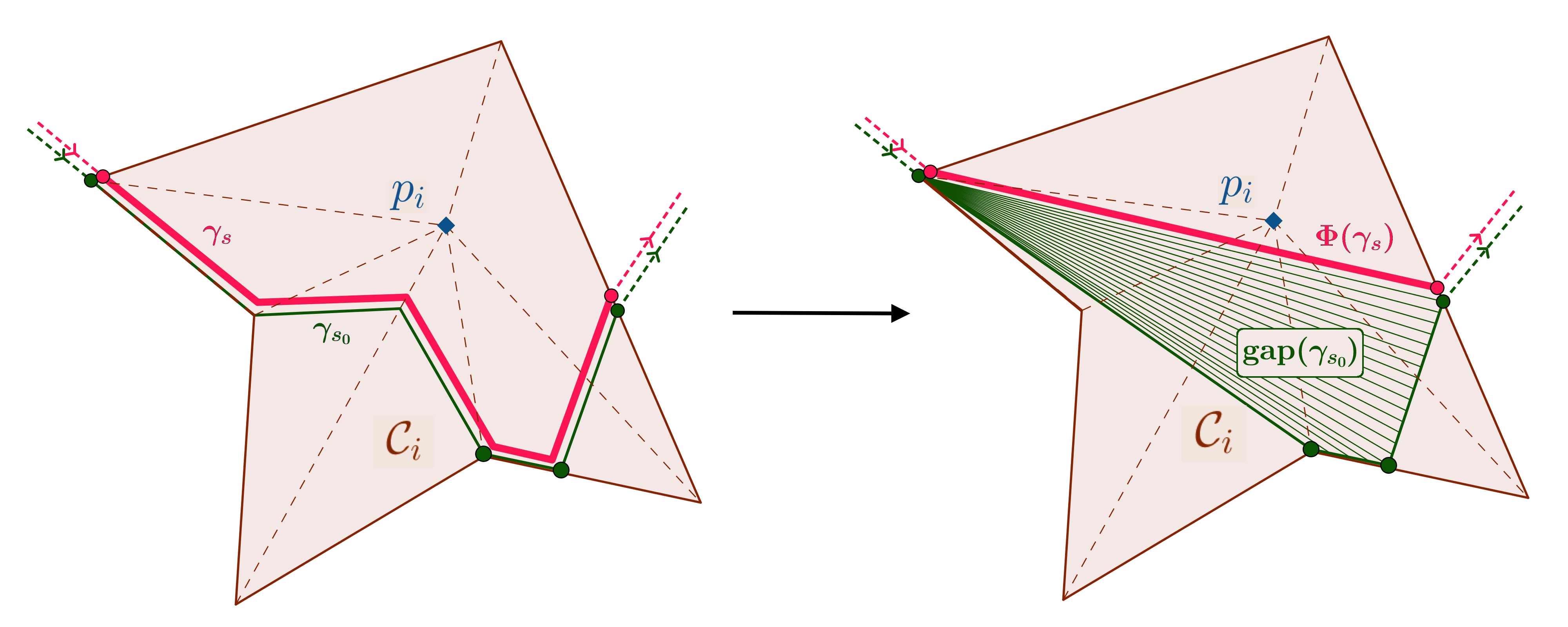} 
			\caption{Disappearance of one or more gates far from the vertex: Two examples of interpolation. In the example in the bottom picture, the $\gamma_{s_0}$ fiber has been added during the preprocessing and provided with an artificial gate. The missing part of the interpolation will be covered by the new gates induced in the preprocessing.}
			\label{breach1}
		\end{center}
\end{figure}

\item Double tear around a convex vertex: Under the action of $\Phi^i_{loc}$, the arcs passing through $p$ which, between two gates, have right and left angles less than or equal to $\pi$ remain attached at $p$. In that case, $\Phi^i_{loc}$ might yield two discontinuities, opening two breaches next to the rightmost arc $\gamma_{r}$ and the leftmost arc $\gamma_{\ell}$, as pictured in Figure~\ref{double}. The two areas to be filled have a triangle as a pattern. Like before, we interpolate into the breach by replacing $\gamma_{r}$ (resp. $\gamma_{\ell}$) with closed arc intervals $\text{gap}(\gamma_{r})$ (resp. $\text{gap}(\gamma_{\ell})$), defined by taking shortest paths to a point $p(t)$ moving continuously on $\gamma_r$ (resp. $\gamma_\ell$) and then following the rest of $\gamma_r$ (resp. $\gamma_\ell$). And as before, the interpolating arcs have length bounded by that of $\gamma_r$ (resp. $\gamma_\ell$). Note that such a breach only happens around convex vertices, since we have uniqueness of shortest paths in a concave star.

\begin{figure}[h!]
		\begin{center}
			\includegraphics[width=.4\textwidth]{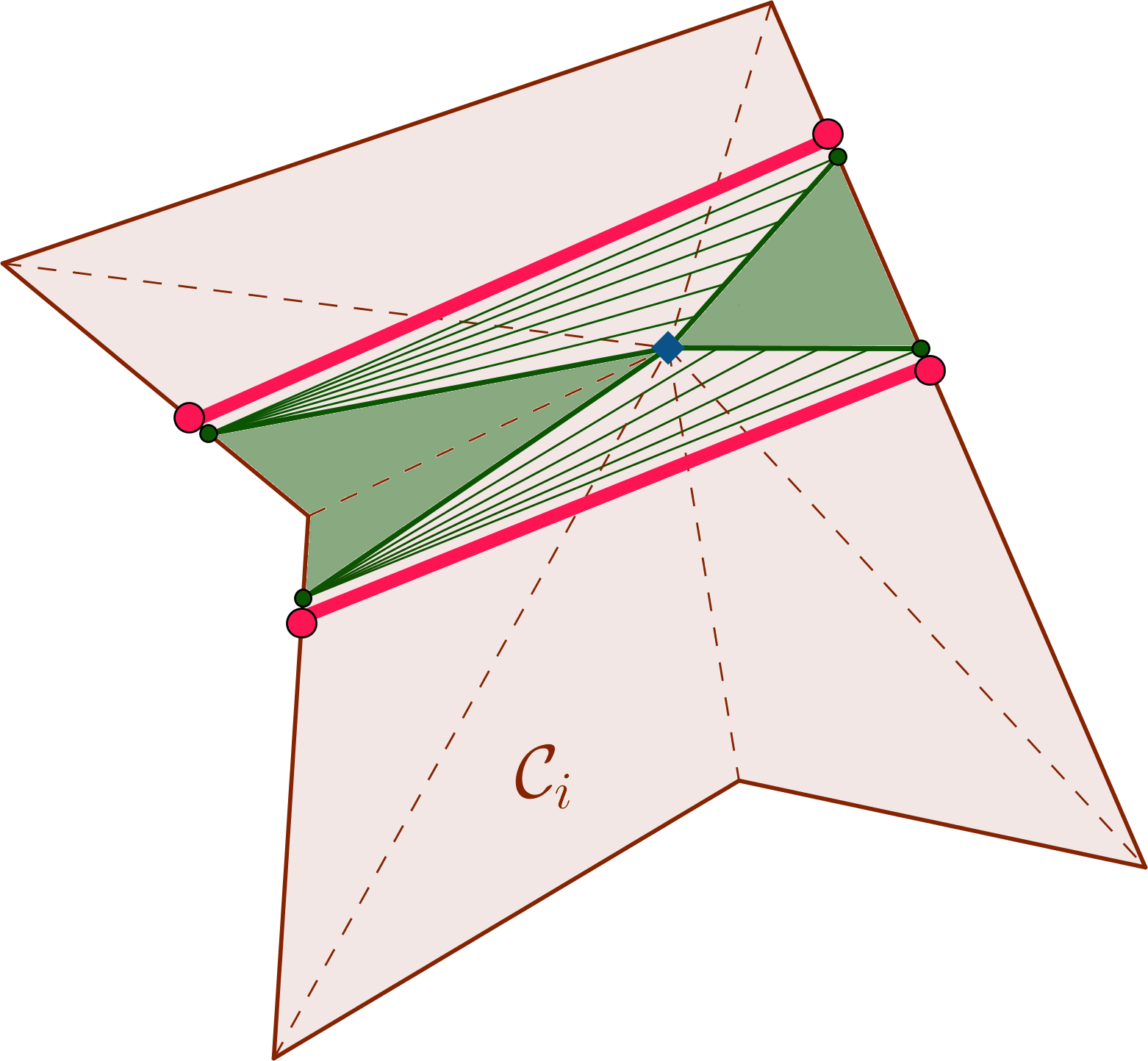} 
			\caption{Double tear around a convex vertex: Interpolating.}
			\label{double}
		\end{center}
\end{figure}

\item Single tear around a vertex: Under the action of $\Phi^i_{loc}$, the arcs passing through $p$ which, between two gates, have one of their angles, right for example, strictly greater than $\pi$ are sent in the opposite region to their greatest angle, the left region to continue the example. To their right, an open interval of arcs is also sent to the left, without creating any discontinuity. At the extremity of this interval we have an arc $\gamma_{r}$ that either defines a right angle equal to $ \pi $ (see Figure~\ref{single}, top), or forms at least one new gate on the boundary of $\mathcal{C}_i$ (see Figure~\ref{single}, bottom). Around this side $\Phi $ is discontinuous and opens a breach on one side of $p$ in the first case, or around $p$ in the second case. The first case is handled exactly as the case of double breaches: we interpolate into the breach by replacing $ \gamma_{r} $ by a closed interval of arcs $\text{gap}(\gamma_{r})$, defined by taking shortest paths to a moving point in $\gamma_r$ and then following $\gamma_r$ (see Figure~\ref{single}, top). The second case is a bit more involved, in some sense it is the combination of the first case and the first item. We first interpolate in the part of the breach that lies in the same region (relative to the gates of $\gamma_r$) of $\mathcal{C}_i$ as $\gamma_r$. This is done by taking a moving point on $\gamma_r$ and taking shortest paths to the moving point \emph{in that region}. Note that the final interpolating arc will pass through the vertex (see Figure~\ref{single}, bottom). Now we can take that arc as if it was an existing fiber, and use it to interpolate the breach in the other region. Here again, this is done by taking a moving point on that arc and taking shortest paths to that arc.

\begin{figure}[h!]
		\begin{center}
		  \includegraphics[width=.7\textwidth]{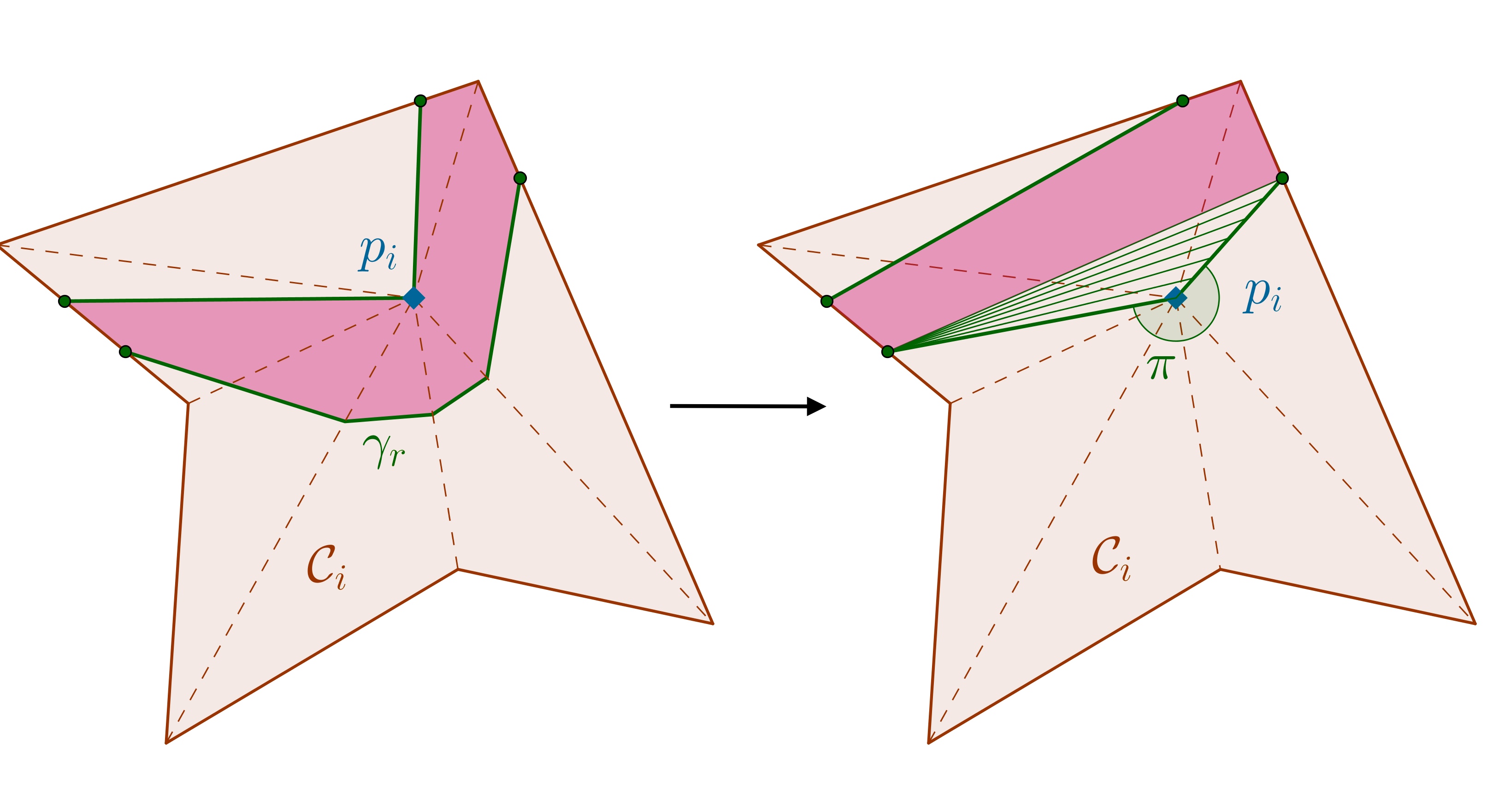}
                  \includegraphics[width=.7\textwidth]{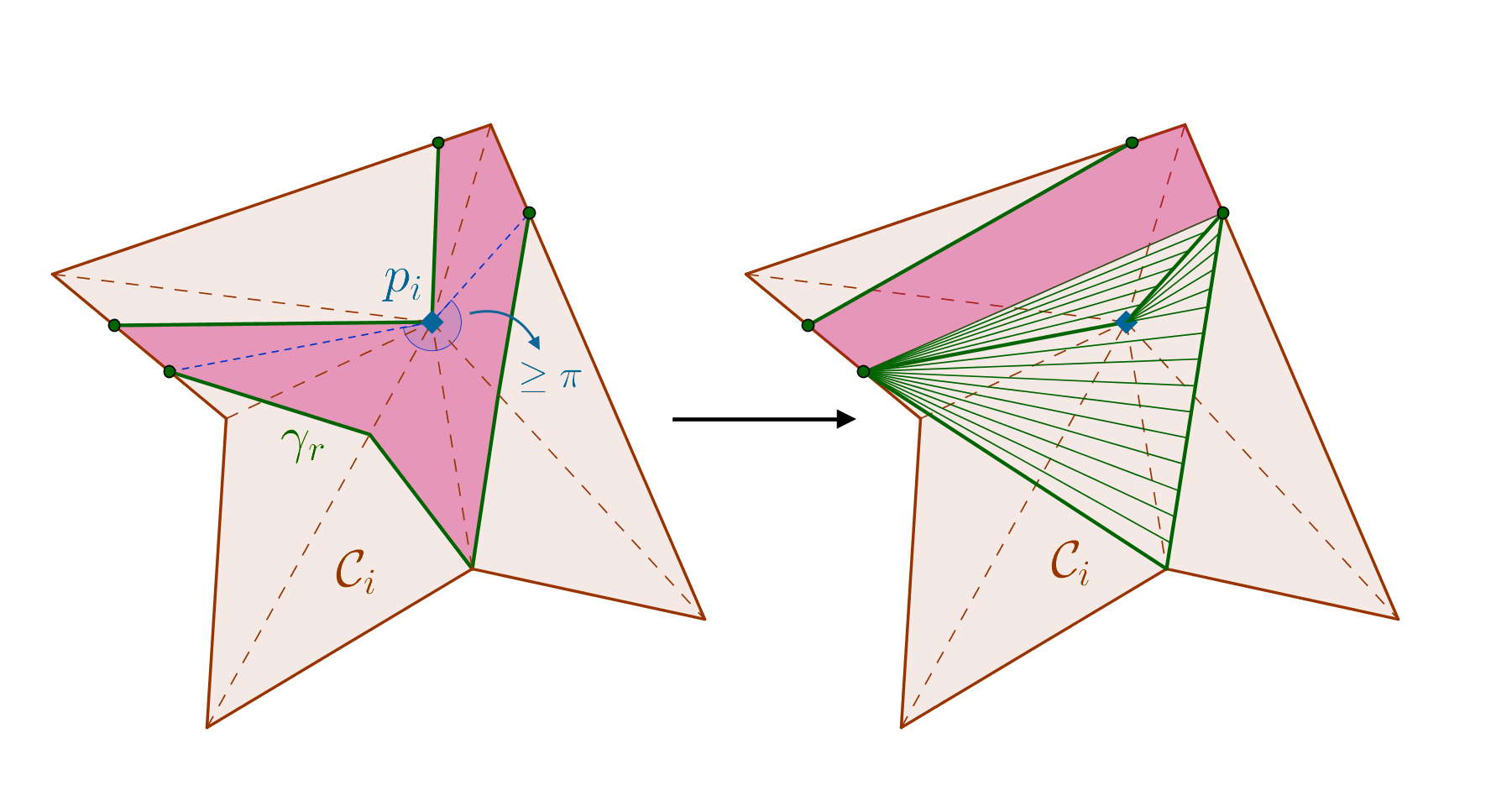} 
			\caption{Single tear around a vertex: Interpolating.}
			\label{single}
		\end{center}
\end{figure}

\item Disappearance of interior curves: A closed curve entirely contained in the interior of a star gets turned into a trivial curve by $\Phi^i_{loc} $. Let $\delta$ denote the greatest parameter such that all the curves parameterized by $[0,\delta)$ are entirely contained in the interior of $\mathcal{C}_i$. Symmetrically, let $\alpha$ denote the smallest parameter such that the curves parameterized by $(\alpha,1]$ are entirely contained in the interior of $\mathcal{C}_i$ (note that $\delta$ and/or $\alpha$ might not exist, then we do nothing on that end). Note that by monotonicity of the sweepout $\beta$, no curve parameterized in $[\delta,\alpha]$ disappears under $\Phi^i_{loc}$. We explain what to do at $\delta$, the situation at $\alpha$ being symmetric. The curve $\gamma_0=\beta(\delta,\cdot)$ coincides with a possibly non-strict subset of $\partial\mathcal{C}_i$. Under the action of $\Phi^i_{loc}$, the curve $\gamma_0$ gets straightened between each pair of gates -- one open to the right, the other to the left. As before, we interpolate within this breach by choosing a point $A$ as a reference point, and then we replace $\gamma_0$ by a closed interval $\text{gap}(\gamma_0)$ of curves, connecting $A$ via shortest path to a point $p(t)$ moving on the boundary of the breach and coming back to $A$ along $\gamma_0$, see Figure~\ref{fig}. Note that there will still be a breach around the vertex $p$ of the star if it is convex, as pictured in the left and right pictures of Figure~\ref{fig}, due to the non-uniqueness of shortest paths between $A$ and some opposite points $B$. This last breach can be filled by continuously moving $A$ and $B$ towards $p$ in such a way that $A$ and $B$ are always connected via a pair of disjoint shortest paths, as pictured in Figure~\ref{gap-trou-final}.

\end{itemize}
\begin{figure}[h!]
		\begin{center}
			\includegraphics[width=\textwidth]{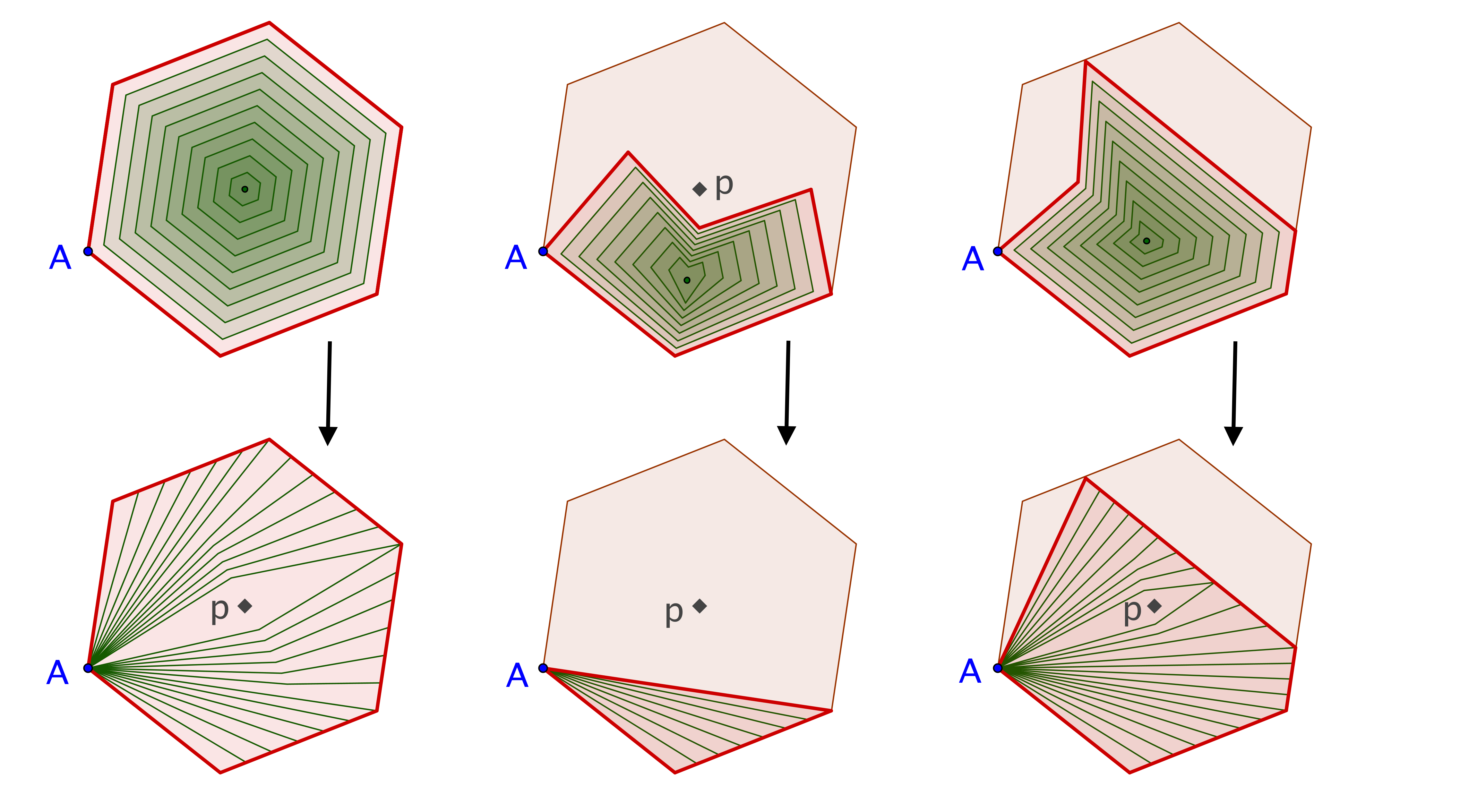} 
			\caption{Interpolating to fill the disappearance of interior curves.}
			\label{fig}
		\end{center}
\end{figure}

\begin{figure}[h!]
		\begin{center}
			\includegraphics[width=0.5\textwidth]{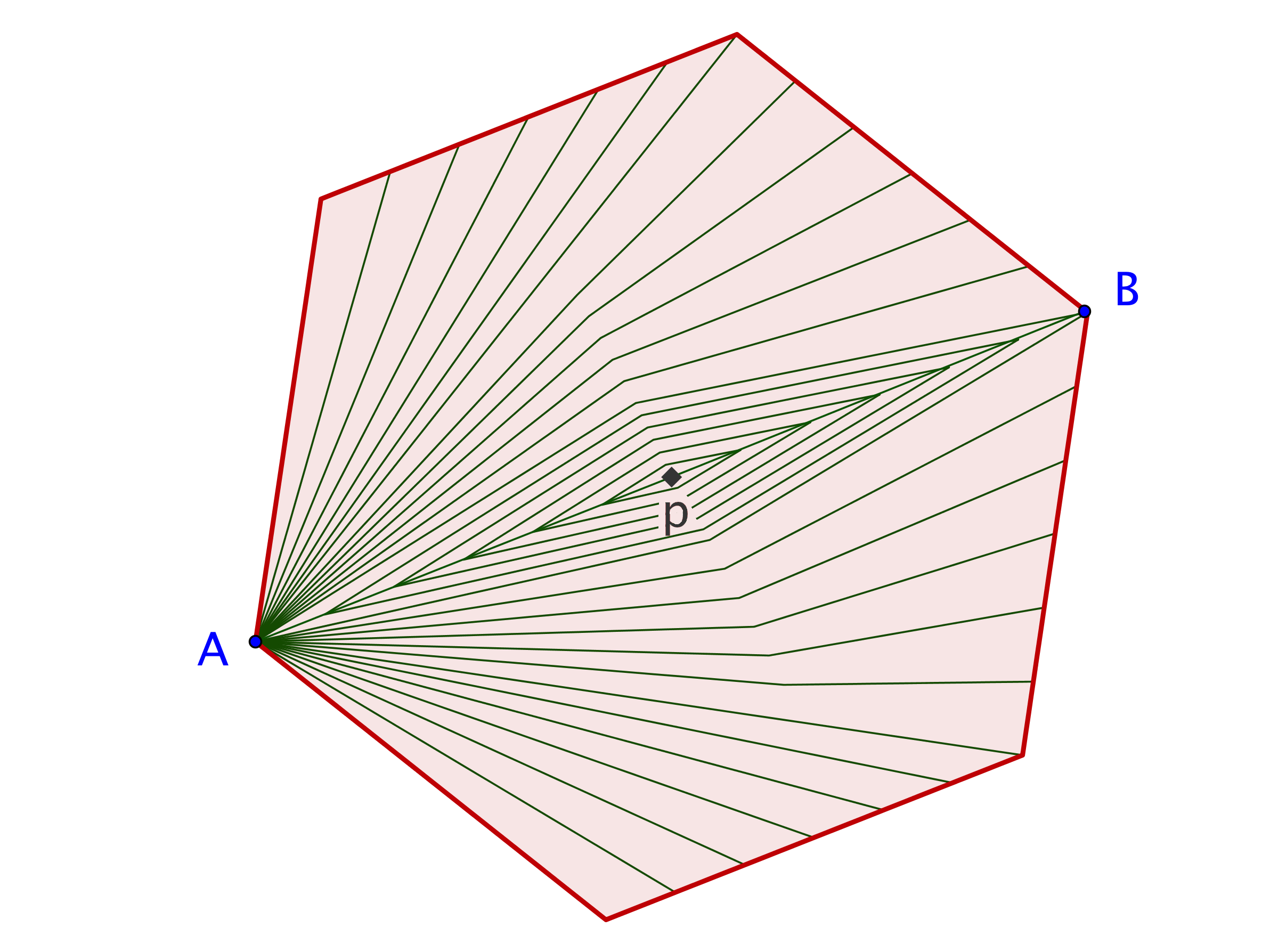} 
			\caption{Disappearance of interior curves: Interpolating.}
			\label{gap-trou-final}
		\end{center}
\end{figure}

In all four cases, the discontinuities have been filled by (1) blowing up the parameter space around a point $s$ to an interval $\text{gap}(s)$ and (2) adding interpolating curves in this interval $\text{gap}(s)$, one of which is $\Phi^i_{loc}(\beta(s,\cdot))$  and all of which have length bounded by that of $\beta(s,\cdot)$, since they are obtained by shortcutting  $\beta(s,\cdot)$ using shortest paths. We define the map $\iota$ as the one sending $s$ to the parameter corresponding to the fiber $\Phi^i_{loc}(\beta(s,\cdot))$, while the surjection $f$ maps the entire interval $\text{gap}(s)$ to $s$ (the maps $i$ and $s$ are defined in the natural way outside of the discontinuities). Therefore we have defined a new map which we denote by $\hat{\Phi}^i_{loc}(\beta)$, whose parameter space is connected to that of $\beta$ using the maps $\iota$ and $f$. As the $\hat{\Phi}^i_{loc}$ get composed to yield $\hat{\Phi}$, the maps $i$ and $f$ are also composed in the natural way.

We argue that the resulting map is a monotone sweep-out. It starts and ends at trivial curves, and by constructions each fiber is piecewise-linear. Furthermore, the disks defined by the fibers are nested, since the effect of $\hat{\Phi}^i_{loc}$ is restricted to the star $\mathcal{C}_i$, where the nesting of disks that was present in $\beta$ is preserved, as the interpolated curves are put inbetween their interpolation targets. Generically, points are covered by the new sweep-out exactly once (since all the fibers can be slightly perturbed to be disjoint), thus the topological degree is one. Finally, since all the interpolating curves have length at most that of a curve it interpolates from, we have the inequality $L(\hat{\Phi}(\beta)(s,\cdot))\leq L(\beta(f(s),\cdot))$, with equality if and only if $\beta(f(s),\cdot)$ is a quasigeodesic.\qed \\

\textbf{Remark:} This proof showcases why our definition of quasigeodesic is the correct one for the disk flow to be appropriately defined on sweep-outs. If we had chosen more strict rules around convex vertices (for example only allowing curves with equal angles on both sides), we could have defined $\Phi$ in a more abrupt way by simply replacing arcs with shortest paths, thus ensuring that no arc through a vertex is fixed by the disk flow. However, this would have yielded tears around a convex vertex $p$ in which our interpolating technique could not have worked, since no fiber of $\beta'$ would be going through the vertex, and there would have been no way to add interpolating fibers of controlled length. In this sense, allowing for an angle at most $\pi$ on both sides is the minimum angular spread allowing for the interpolation steps in the proof of Lemma~\ref{lem:hatphi} to work. For concave vertices, shortest paths between points on the boundary of a star $\mathcal{C}_i$ might require the whole spread of angles at least $\pi$ on both sides, hence this choice of definition.

\section{Existence of a simple closed quasigeodesic}\label{S:existence}

We are now ready to prove Theorem~\ref{thm:existence}. At this stage, our proof follows the same lines as that of Hass and Scott~\cite[Theorem~3.11]{shortening}.

\begin{proof}[Proof of Theorem~\ref{thm:existence}]
  Let $\beta$ be the monotone sweep-out of $\mathcal{B}$ of width at most $M$ described by Lemma~\ref{lem:sweepout}. We consider the sequence of sweep-outs $(\hat{\Phi}^j(\beta))_j$. For any $j \in \mathbb{N}$, the parameter space of $\hat{\Phi}^j(\beta)$ is the product of an interval $[0,1]$ by $\mathbb{S}^1$, the first factor being related to that of $\hat{\Phi}^{j-1}(\beta)$ via the surjection $f_j$ of Lemma~\ref{lem:hatphi}. Therefore, in order to track the history of a fiber in $\hat{\Phi}^j(\beta)$ under the action of $\hat{\Phi}$, we introduce the sequence of parameters $\mathcal{O}_j=(s_0,\dots,s_j)$ such that for all $k$ beetwen $0$ and $j-1$ : $s_{k}=f_{k}(s_{k+1})$. Each space of parameters describing $\mathcal{O}_j$ is homeomorphic to the interval $[0,1]$ (via the trivial homeomorphism $(s_0, \ldots ,s_j) \mapsto s_j$), and we consider the projective limit $\mathcal{I}$ of these intervals, which is thus also homeomorphic to an interval $[0,1]$. An element of this projective limit therefore consists of an infinite sequence $\mathcal{O}=(s_0, s_1 \ldots)$ such that for all $k$, $s_{k}=f_{k}(s_{k+1})$.

  Let $\mathcal{O}=(s_1, s_2 \ldots)$ be an element of $\mathcal{I}$, which thus corresponds to a family of curves $c_j:=\hat{\Phi}^j(\beta)(s_j)$, and let us assume that all these curves are trivial for $j$ bigger than some $k$. Then there is an open neighborhood of $\mathcal{O}$ for which this is also the case, as a curve becomes trivial under the action of some $\hat{\Phi}^i_{loc}$ if and only if it is fully contained in the interior of a star. Therefore, the set $\mathcal{V}^k \subset \mathcal{I}$ of sequences of curves for which the $k$th curve is not trivial is a closed subset of $\mathcal{I}$. Furthermore, it is not empty, as otherwise some intermediate sweep-out after $\hat{\Phi}^{k-1}(\beta)$ would consist of only curves contained in the interior of some star and thus would miss some point of the sphere $S$, in contradiction with the requirement that a sweep-out be of topological degree one. Finally, we have the natural inclusion $\mathcal{V}^{k+1} \subset \mathcal{V}^k$ since if $\hat{\Phi}^{k+1}(\beta)(s_{k+1})$ is not trivial, then this is also the case for $\hat{\Phi}^{k}(\beta)(s_{k})$. We can thus consider the intersection $\cap_{k\in \mathbb{N}} \mathcal{V}^k$ which is an infinite intersection of nested closed non-empty subsets of $\mathcal{I}$ and is thus non-empty. An element in this intersection is a sequence $\mathcal{O_\infty}=(s_1, s_2 \ldots)$ such that none of the curves $c_n=\hat{\Phi}^n(\beta)(s_n)$ is trivial. As $\Omega^{pl}$ is compact, we can extract from this sequence of curves a convergent subsequence $c_k$, which converges to a curve $c_{\infty}$. We claim that $c_{\infty}$ is a weakly simple closed quasigeodesic of length at most $M$. The fact that $c_{\infty}$ is weakly simple follows from the fact that it is a limit of weakly simple curves. The bound on the length follows from the fact that by Lemma~\ref{lem:sweepout}, the width of each of the sweep-outs $\hat{\Phi}^n(\beta)$ is at most $M$, and thus in particular $c_{\infty}$ is a limit of curves of length at most $M$ and thus has length at most $M$, since the length is a lower semi-continuous function on $\Omega$.

  Finally, in order to prove that $c_{\infty}$ is a quasigeodesic, we first introduce the following generalization of Lemma~\ref{lem:ppty}.

  \begin{lemma}[Generalization of Lemma~\ref{lem:ppty}]\label{lem:genppty}
  For all $\varepsilon>0$, there exists $\eta>0$ such that for any sweepout $\beta \in \mathcal{B}$, for any $s\in [0,1]$ and for any $i\in \mathbb{N}$, if $\hat{\Phi}^i_{loc}(\beta(s,\cdot))\neq 0$ and $L(\beta(s,\cdot))-L(\hat{\Phi}^i_{loc}(\beta)(y,\cdot))<\eta$ with $y\in f^{-1}(s)$, then $\tilde{d}(\beta(c,\cdot),\hat{\Phi}^i_{loc}(\beta)(y,\cdot))<\varepsilon$, where $\tilde{d}(c_1,c_2)=\max_{x \in c_1}d(x,c_2)$.
  \end{lemma}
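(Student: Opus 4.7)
The plan is to reduce Lemma~\ref{lem:genppty} to Lemma~\ref{lem:ppty} by unpacking the construction of $\hat{\Phi}^i_{loc}$ from Lemma~\ref{lem:hatphi}. Fix $\varepsilon>0$ and let $\eta = \eta(\varepsilon)$ be the value provided by Lemma~\ref{lem:ppty} for this $\varepsilon$. Given $\beta$, $s$, $i$, and $y\in f^{-1}(s)$ satisfying the hypotheses, we split on whether $\hat{\Phi}^i_{loc}(\beta)(y,\cdot)$ is the ``generic'' fiber equal to $\Phi^i_{loc}(\beta(s,\cdot))$, or whether it is one of the interpolating curves introduced by the surgery procedures of Lemma~\ref{lem:hatphi} to repair one of the four kinds of tears (breach, double tear, single tear, disappearance of interior curves).

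In the generic case, the conclusion is exactly Lemma~\ref{lem:ppty}. For the interpolating case, the observation driving the argument is that, by inspection of each of the four surgery constructions, every interpolating fiber $c'$ has the following structure: $c'$ coincides with $\beta(s,\cdot)$ outside of a single sub-arc $\alpha\subset \mathcal{C}_i$ of $\beta(s,\cdot)$, and replaces $\alpha$ by a shortest path $\alpha'$ between its two endpoints, taken inside the appropriate sub-region of $\mathcal{C}_i$ (either a left/right region, or the whole star). This is immediate by construction: the interpolants are always obtained by shortcutting a portion of $\beta(s,\cdot)$ via a shortest path to a moving reference point on the boundary of the breach.

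With this structural description in hand, the length loss satisfies $L(\beta(s,\cdot))-L(c') = L(\alpha)-L(\alpha')$, and since $\alpha'$ is a shortest path between the endpoints of $\alpha$ in a region of $\mathcal{C}_i$, the Euclidean (respectively CAT(0)) triangle estimate used inside the proof of Lemma~\ref{lem:ppty} applies unchanged to the pair $(\alpha,\alpha')$. Hence a length loss smaller than $\eta$ forces every point of $\alpha$ to lie within distance $\varepsilon$ of $\alpha'$, and as $\beta(s,\cdot)$ and $c'$ agree outside $\alpha$ we deduce $\tilde{d}(\beta(s,\cdot),c')<\varepsilon$.

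The only subtlety is that in the ``single tear around a vertex'' and ``disappearance of interior curves'' cases the interpolation is performed in two stages (first on one side of $p_i$, then on the other, using the first stage as a new scaffold). I expect this to be the main bookkeeping obstacle: one must verify that the composed operation still writes $c'$ as a single shortest-path shortcut of $\beta(s,\cdot)$ in an appropriate region, so that the triangle comparison can be invoked. This follows because the second-stage shortcut is itself a shortest path whose length is controlled by that of the first-stage interpolant, which in turn is bounded by $L(\beta(s,\cdot))$; chaining these bounds and applying the estimate at most twice (with $\varepsilon/2$ in place of $\varepsilon$) yields the desired conclusion with a slightly smaller choice of $\eta$.
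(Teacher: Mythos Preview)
Your overall plan---reduce to the Euclidean triangle comparison underlying Lemma~\ref{lem:ppty}---is exactly the paper's strategy, and the split into the generic fiber versus an interpolating fiber is the right first move. The gap is in your structural description of the interpolating fiber.

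You assert that every interpolating curve $c'$ coincides with $c_N:=\beta(s,\cdot)$ outside a single sub-arc $\alpha$ and replaces $\alpha$ by a \emph{single} shortest path $\alpha'$. This is not what the construction in Lemma~\ref{lem:hatphi} produces. Between the two gates $A$ and $C$, an interpolating fiber $c_y$ is the concatenation of \emph{two} shortest segments: a shortcut $[AB]$ from the gate $A$ to a moving point $B$ lying on $\Phi^i_{loc}(c_N)$, followed by the portion $[BC]$ of $\Phi^i_{loc}(c_N)$ itself. Thus $\alpha'$ is a broken line $ABC$, not a geodesic segment, and this is the generic shape in \emph{all} of the surgery cases, not only in the two-stage ones you single out. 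Consequently the estimate of Lemma~\ref{lem:ppty} does not apply ``unchanged'': that lemma compares $c_N$ to a straight segment, and the corner at $B$ prevents a direct invocation.

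The paper handles exactly this difficulty. Writing $F$ for the point of $c_N$ responsible for the discontinuity (either on $\partial\mathcal{C}_i$ or at $p_i$), it does a case split on whether the far point $E\in c_N$ lies between $A$ and $F$ or between $F$ and $C$. In each case it builds avatar curves $Av^-(c_N)$ (shorter than $c_N$, still passing through $E$) and, when needed, $Av^+(c_y)$ (longer than $c_y$) so that the comparison collapses to a genuine Euclidean triangle of altitude at least $\varepsilon$ (or $\varepsilon/2$ after the same dichotomy around $p_i$ used in Lemma~\ref{lem:ppty}). Your ``apply the estimate twice with $\varepsilon/2$'' idea points in the right direction but does not go through as stated, because the hinge point $B$ lies on $\Phi^i_{loc}(c_N)$ rather than on $c_N$, so the two pieces $[AB]$ and $[BC]$ are not each shortcuts of sub-arcs of $c_N$. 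You need the avatar construction (or an equivalent device) to absorb the corner at $B$ before the triangle bound can be invoked.
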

  \begin{proof}
Unlike Lemma \ref{lem:ppty}, it may be that $\beta(s,\cdot)$ is a discontinuity point of $\Phi$ and in particular, $c_y:=\hat{\Phi}^i_{loc}(\beta)(y,\cdot)$ is an interpolation curve, more or less distant from $\Phi_{loc}^i(\beta(s,\cdot))$. Using the notations used in the proof of Lemma \ref{lem:ppty}, $E$ is a point of $c_N:=\beta(s,\cdot)$ at least at a distance $\varepsilon$ from $c_y$. Recall that $c_N$ generates, under the action of $\hat{\Phi}_{loc}^i$, an interval of curves noted gap$(c_N)$. These curves connect two gates $A$ and $C$. The non-continuity of $c_N$ is caused by the fact that $c_N$ passes through at least one point $F$ located either on $\partial \mathcal{C}_i$ or in $p_i$. The relative positions of $A$, $C$ and $F$ give rise to numerous interpolations, essentially described in the Lemma \ref{lem:hatphi}. In particular, the area covered by gap$(c_N)$ may or may not contain $p_i$. What is important in the following reasoning is whether $c_N$ and $c_y$ stay on the same sides or on opposite sides of $p_i$.  We immediately reduce the second case to the first by using the dichotomy argument used in the proof of Lemma \ref{lem:ppty} (Figure \ref{lemme3bis}). Thus we can assume that $c_N$ and $c_y$ are on the same side of $p_i$. Moreover, by the same arguments as in Lemma \ref{lem:ppty} (Figure \ref{lemme3}), we can assume that we are working away from $p_i$, i.e. in a star-shaped portion of the Euclidean plane.

The portion of $c_y$ located between $A$ and $C$ is the concatenation of two shortest paths: the one between $A$ and any point $B$ along the portion of $\Phi(c_N)$ that joins $F$ and $C$ -- the one between this point $B$ and $C$. By hypothesis, $c_N$ also passes, between $A$ and $C$, through the point $E$ which is at least $\varepsilon$ away from $c_y$, i.e. at least $\varepsilon$ away from any segment constituting $c_y$ (recall that $c_y$ is piecewise-linear). We distinguish two cases: when $E$ is located between $A$ and $F$ and when $E$ is located between $F$ and $C$.
\begin{figure}[ht]
		\begin{center}
			\includegraphics[width=1\textwidth]{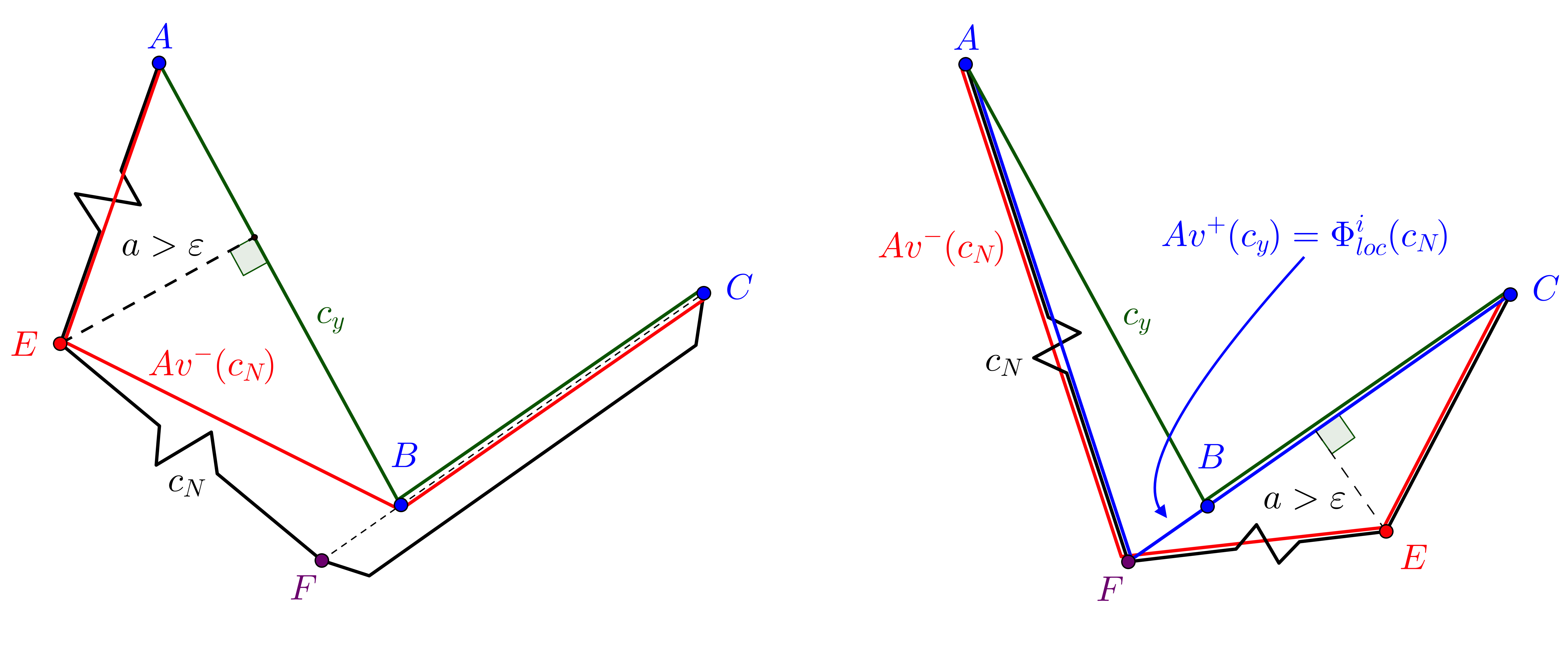} 
			\caption{Generalization of the Lemma \ref{lem:ppty}.}
			\label{lemmegeneral}
		\end{center}
\end{figure}
\begin{itemize}
    \item Case 1, Figure \ref{lemmegeneral}, left : The length loss between $c_N$ and $c_y$ is at least the length loss between a curve $Av^-(c_N)$ -- shorter than $c_N$ -- and the curve $c_y$, with which it coincides, except between $A$ and $B$, where it deviates by $a>\varepsilon$. This situation is then the one of a Euclidean triangle which was handled in the proof of Lemma \ref{lem:ppty}.
    \item Case 2, Figure \ref{lemmegeneral}, right : The length loss between $c_N$ and $c_y$ is at least the length loss between a curve $Av^-(c_N)$ -- shorter than $c_N$ -- and the curve $Av^+(c_y)=\Phi^i_{loc}(c_N)$ -- longer than $c_y$ -- with which it coincides, except between $F$ and $C$, where it deviates from it by $a>\varepsilon$. Here again, the situation is the one of an Euclidean triangle which was handled in the proof of Lemma~\ref{lem:ppty}. Note that if the distance from $E$ to $c_y$ is not realized at the perpendicular of a segment (at $B$ for example), then we readjust $Av^+(c_y)$ in the manner of Figure \ref{property}.
\end{itemize}
 \renewcommand{\qedsymbol}{$\lrcorner$}
  \end{proof}

  Now, the argument is identical to the one in the proof of Lemma~\ref{lem:quasigeod}, to which we refer. If $c_\infty$ is not a quasigeodesic, there is one point $p$ in its image which is not locally quasigeodesic, i.e., there are two points $p_1$ and $p_2$ in a small neighborhood in $c_{\infty}$ such that $p_1$, $p$ and $p_2$ are not aligned, and if $p$ is a vertex, the angle at $p$ is disallowed by the curvature there. For $k$ big enough, $c_k$ will also have this property. Now, we consider a star $\mathcal{C}_i$ which contains $p$ in its interior. By Lemma~\ref{lem:genppty}, $c_k$ will have moved very little when $\Phi^i_{loc}$ acts on it, and thus this action will diminish its length by a fixed quantity that can be lower bounded based on $c_{\infty}$, which is impossible since the lengths of the $c_k$ converge.
\end{proof}

Our techniques only guarantee the existence of a weakly simple closed quasigeodesic of length at most $M$. In contrast, in the convex case, Pogorelov~\cite{pogorelov} proved the existence of a simple closed quasigeodesic (where the degenerate case of two vertices of curvature at least $\pi$ connected twice by a curve is allowed). The proof of Pogorelov works by approximating a convex polyhedron by smooth surfaces, appealing to the Lyusternik-Schnirrelmann on the smooth surfaces to find simple closed geodesics, taking the limit of such simple closed geodesics and arguing that (1) the limit is a quasigeodesic and (2) it is simple. We argue that the same technique proves the existence of a simple closed quasigeodesic of length at most  $M+\varepsilon$, for an arbitrarily small $\varepsilon>0$. Indeed, the sweep-out that we describe on $S$ naturally induces sweep-outs of width at most $M+\varepsilon$ on the approximating smooth surfaces that are close enough, and thus the first simple closed geodesic output by the Lyusternik-Schnirrelmann theorem in each of these surfaces has length at most $M+\varepsilon$. Taking the limit of those yields a simple closed quasigeodesic of length at most $M+\varepsilon$. We will use this result at the end of the next section.

\section{An algorithm to compute a weakly simple closed quasigeodesic}\label{S:algorithm}

In this section, we leverage the existence of a weakly simple closed quasigeodesic of length at most $M$ proved in Theorem~\ref{thm:existence} in order to design an algorithm to find it.

Let $S$ be a polyhedral sphere and denote by $\mathcal{E}=\{p_1,\dots,p_n,e_1,\dots,e_m\}$ be the set of vertices and open edges of $S$. To a closed curve $c:\mathbb{S}^1\longrightarrow S$, we associate the cyclic word $\mathcal{E}(c)$ whose successive letters are the elements of $\mathcal{E}$ met by $c(t)$ as $t$ moves around $\mathbb{S}^1$ (note that an edge can be either crossed or followed). Given a bound on the length of $c$, we want to derive a bound on the combinatorics of $c$, i.e., a bound on the length of $\mathcal{E}(c)$. This is hopeless without any assumption, as a curve spiraling around a vertex  for an arbitrarily long time showcases. But when $c$ is a weakly simple quasigeodesic, we can obtain such a bound. Indeed, our first observation is that a weakly simple quasigeodesic never spirals around a vertex.

\begin{lemma}\label{lem:nonspiral}
Let $\gamma$ be a weakly simple closed quasigeodesic and $\mathcal{C}_i$ be the open star of a vertex $v_i$ of degree $d_i$. Then for any connected component $\alpha$ of $\gamma \cap \mathcal{C}_i$, the number of intersections of $\alpha$ with edges and vertices of $\mathcal{C}_i$ is at most $d_i$.
\end{lemma}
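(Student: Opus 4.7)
My plan is to analyze $\alpha$ via the development of the star $\mathcal{C}_i$ around the vertex $v_i$. Cutting $\mathcal{C}_i$ along one internal edge and unfolding its $d_i$ triangles about $v_i$, I obtain a planar sector of angle $\Theta$ (the cone angle at $v_i$) with $v_i$ at the origin, where the case $\Theta > 2\pi$ is treated abstractly. The $d_i$ internal edges correspond to rays from the origin, and $\alpha$, being locally straight away from $v_i$, develops to a curve that is straight outside the origin. I split the proof according to whether $\alpha$ passes through $v_i$.

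Suppose $\alpha$ avoids $v_i$. Its development is then a single straight segment not through the origin. Along such a segment the angular coordinate $\theta$ at the origin varies monotonically (the derivative $d\theta/dt = (xv_y - yv_x)/r^2$ has constant sign along a straight line), so each ray from the origin is crossed at most once. The main step is to bound the angular span of this developed segment by $\Theta$, which ensures that the single-sheet development captures $\alpha$ entirely, each internal edge is represented by exactly one ray, and hence $\alpha$ produces at most $d_i$ intersections. I would argue this bound by contradiction. Any straight segment not through the origin has angular span strictly less than $\pi$; if this span additionally exceeded $\Theta$, then $\Theta < \pi$, and using the symmetry of the radial function $r(\theta) = d/\cos(\theta - \theta_0)$ about the angle $\theta_0$ of the perpendicular foot from the origin, one can find two points of $\alpha$ in the plane whose radial distances agree and whose angular coordinates differ by exactly $\Theta$. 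Under the cone identification these two plane points project to the same surface point; their plane tangent vectors are equal (since $\alpha$ is straight), so on the surface the two tangents are related by a rotation of $\Theta < \pi$ around $v_i$, which preserves no line direction, making them non-parallel. This gives a transverse self-crossing, contradicting the weak simplicity of $\gamma$.

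For the case where $\alpha$ passes through $v_i$, I first show this happens at most once: any sub-arc from $v_i$ back to $v_i$ would develop to a geodesic leaving and returning to the origin in the plane, which is impossible for a straight line. Hence $\alpha$ is composed of at most two sub-arcs meeting at $v_i$, each developing to a straight ray from the origin; two distinct rays from the origin meet only at the origin, so neither sub-arc crosses any internal edge transversely, and the only intersection counted is the single passage through $v_i$, for a total of $1 \leq d_i$. The main technical hurdle is the transversality argument in the first case: one must carefully identify the surface tangent vectors at the two plane preimages of the self-crossing candidate and verify they are non-parallel rather than merely tangent (which would be compatible with weak simplicity), and this relies precisely on the radial symmetry of straight segments about the foot of their perpendicular from the origin, together with the fact that $\Theta < \pi$ in the only regime where winding is geometrically possible.
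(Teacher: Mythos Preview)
Your development-based approach is different from the paper's and, modulo one gap, gives a correct proof. The paper argues combinatorially: if $\alpha$ hits some interior edge $e$ twice, then (orienting $\alpha$ so that the second hit is closer to $v_i$) weak simplicity traps the continuation of $\alpha$ between its own earlier passage and $v_i$, so $\alpha$ must spiral inward indefinitely, contradicting that $\gamma$ is \emph{closed}. You instead manufacture an explicit transverse self-crossing of $\alpha$ on the cone, contradicting weak simplicity directly; this is an appealing alternative in that closedness of $\gamma$ is used only implicitly (to ensure $\gamma$ is not entirely contained in one star, so that $\alpha$ genuinely has two boundary endpoints).

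The gap is in the step where you locate two points of $\alpha$ with equal radius and angular difference $\Theta$. The symmetry of $r(\theta)=d/\cos(\theta-\theta_0)$ only tells you that the unique candidate pair is $\theta_0\pm\Theta/2$; it does \emph{not} tell you that these two angles lie in the parameter interval $[\theta_1,\theta_2]$ of the developed segment, and when $\theta_0$ sits near (or beyond) an endpoint they need not. What rescues the argument is a fact you never invoke: both endpoints of $\alpha$ lie on the outer boundary of $\mathcal{C}_i$, whose radial profile $R(\theta)$ is $\Theta$-periodic in the development. Hence $r(\theta_1)=R(\theta_1)=R(\theta_1+\Theta)>r(\theta_1+\Theta)$ (strict, since $\theta_1+\Theta$ is interior to $\alpha$) and symmetrically $r(\theta_2)>r(\theta_2-\Theta)$, so $g(\theta):=r(\theta+\Theta)-r(\theta)$ changes sign on $[\theta_1,\theta_2-\Theta]$ and the intermediate value theorem produces the pair. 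With this repair your transversality conclusion (the two tangents differ by a rotation of $\Theta\in(0,\pi)$, hence are non-parallel) is correct and finishes the proof.
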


\begin{proof}
  If $\alpha$ passes through the vertex $v_i$, then it exits on both sides tracing a straight line in one of the triangles of $\mathcal{C}_i$. This straight-line reaches directly the opposite edge of the triangle, therefore in this case the number of intersections of $\alpha$ with edges and vertices of $\mathcal{C}_i$ is at most two.

  If $\alpha$ does not pass through the vertex $v_i$, then let us denote by $e$ the first edge adjacent to $v_i$ that it crosses. Note that within a triangle of $\mathcal{C}_i$, by quasigeodesicity, $\alpha$ enters from one edge and does not backtrack, i.e., it escapes from another edge. Therefore, either $\alpha$ escapes from $\mathcal{C}_i$ before crossing $e$ again, in this case it crosses at most $d_i$ edges, or it crosses $e$ again. In the latter case, up to reversing orientation of $\alpha$ we can assume that the second crossing point is closer to $v_i$ than the first crossing point. Tracing $\alpha$ after the second crossing point, we see that in each triangle that it enters, it cannot escape $\mathcal{C}_i$  since, by weak simplicity, it cannot cross the previous edge that it traced, and is thus forced to continue spiraling around $v_i$ indefinitely. This contradicts the assumption that $\gamma$ is closed, finishing the proof. 
\end{proof}

The following geometric lemma will come handy to bound the combinatorics of a closed simple quasigeodesic.

\begin{lemma}\label{lem:quad}
Let $Q$ be a Euclidean quadrilateral consisting of two Euclidean triangles glued along an edge. Then the distance between two opposite sides of $Q$ is lower bounded by the smallest altitude of the two triangles.
\end{lemma}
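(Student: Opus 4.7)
The plan is to set up notation, reduce the distance computation to a minimization over the shared diagonal using the fact that $s_1$ and $s_2$ lie on opposite sides of it, and then use elementary trigonometry to bound each of the two summands by an altitude times a linear weight in the position on the diagonal.

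First I would name the two triangles: write $T_1 = CAB$ and $T_2 = ADB$, glued along their common edge $AB$, so that $Q$ has vertices $C, A, D, B$ and sides $CA$, $AD$, $DB$, $BC$. By symmetry, it suffices to treat the pair of opposite sides $s_1 := CA \subset T_1$ and $s_2 := DB \subset T_2$ (the other pair is handled identically after exchanging the roles of the endpoints). Since $s_1 \subset T_1$, $s_2 \subset T_2$, and $T_1 \cap T_2 = AB$, any continuous path from $s_1$ to $s_2$ in $Q$ must cross $AB$ at some point $P$. Because each triangle is convex, a shortest path decomposes as a straight segment $XP$ in $T_1$ with $X \in s_1$ followed by a straight segment $PY$ in $T_2$ with $Y \in s_2$. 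Hence
\[
d_Q(s_1, s_2) \;=\; \min_{P \in AB}\bigl(d_{T_1}(P, s_1) + d_{T_2}(P, s_2)\bigr).
\]

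The core of the proof is a trigonometric lower bound for each summand. Let $\alpha = \angle CAB$ and $\beta = \angle ABD$. For any $P \in AB$, the distance from $P$ to the segment $CA$ is at least the distance from $P$ to the line supporting $CA$, which equals $|PA|\sin\alpha$; this step works whether or not the perpendicular foot falls inside the segment. Writing $h_B^{T_1}$ for the altitude of $T_1$ from $B$ to side $CA$, one has $h_B^{T_1} = |AB|\sin\alpha$, so
\[
d_{T_1}(P, s_1) \;\geq\; \frac{|PA|}{|AB|}\, h_B^{T_1}.
\]
Symmetrically in $T_2$ with altitude $h_A^{T_2}$ from $A$ to side $DB$,
\[
d_{T_2}(P, s_2) \;\geq\; \frac{|PB|}{|AB|}\, h_A^{T_2}.
\]

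Setting $t = |PA|/|AB| \in [0,1]$, the sum is at least $t\, h_B^{T_1} + (1-t)\, h_A^{T_2}$, which is a convex combination and hence is at least $\min(h_B^{T_1}, h_A^{T_2})$. Both of these are altitudes of the two triangles forming $Q$, so they are bounded below by the smallest altitude among all six altitudes of $T_1$ and $T_2$, giving the desired inequality. The same argument applied to the other pair of opposite sides yields $\min(h_A^{T_1}, h_B^{T_2})$, which is again bounded below by the smallest altitude.

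I do not expect a serious obstacle here: the only subtle point is that the perpendicular foot from $P$ to line $CA$ might not lie inside the segment $CA$, but this is inconsequential because the distance from a point to a segment always dominates the distance to the line supporting that segment, which is the only inequality we use.
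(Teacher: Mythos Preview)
Your proof is correct and takes a genuinely different route from the paper's. The paper argues by first observing that the minimizing pair $(x,y)$ can be taken with $x$ a vertex of $Q$, and then does a two-case analysis depending on whether $x$ is an endpoint of the shared diagonal $AB$ or not: if yes, the distance from $x$ to the opposite side is (at least) an altitude of one triangle; if no, the realizing path must cross $AB$, so its length is at least the altitude from $x$ to $AB$.

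Your argument avoids this case split entirely. By cutting every path from $s_1$ to $s_2$ at its crossing point $P$ on the diagonal and lower-bounding each half by the distance from $P$ to the supporting line of the relevant side, you reduce the problem to the linear estimate $t\,h_B^{T_1} + (1-t)\,h_A^{T_2} \geq \min(h_B^{T_1}, h_A^{T_2})$. This is more uniform and arguably more transparent: it makes explicit that the bound comes from interpolating two specific altitudes along the diagonal, and it works directly with the intrinsic distance in $Q$ without needing to locate the minimizer or worry about whether $Q$ is convex. The paper's version is terser but leans on the vertex-realization fact and is slightly imprecise when it says the distance ``is actually realized by one of the altitudes'' (it is only bounded below by one). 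Your handling of the possible failure of the perpendicular foot to land on the segment is also explicit, which is a nice touch.
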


\begin{proof}
 One easily sees that the distance between two opposite sides of $Q$ is realized by two points $x$ and $y$, one of which, say $x$, can be assumed to be a vertex of $Q$. Now we distinguish two cases, depending on whether the edge $e$ separating the two triangles inside $Q$ is adjacent to $x$ or not. If yes, then the distance between $x$ and $y$ is actually realized by one of the altitudes of the two triangles, and thus the lemma follows. Otherwise the path connecting $x$ to $y$ crosses the edge $e$, and thus the distance between $x$ and $y$ is larger than the distance between $x$ and $e$, and thus bigger than the altitude connecting $x$ to $e$.
\end{proof}

We then have the following proposition showing that some quasigeodesic of bounded combinatorial complexity exists.

\begin{proposition}\label{prop:quasi}
Let $S$ be a polyhedral sphere, let $M$ denote the sum of the edge-lengths of the triangles of $S$, let $h$ denote the smallest altitude of the triangles of $S$, and let $d$ be the maximum degree of a vertex in $S$. Then there exists a weakly simple closed quasigeodesic on $S$ such that the length of $\mathcal{E}(\gamma)$ is bounded by:

  \[\eta_{\gamma}= \lceil(d+1) M/h\rceil\]
\end{proposition}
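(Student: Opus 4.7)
The plan is to apply Theorem~\ref{thm:existence} to produce a weakly simple closed quasigeodesic $\gamma$ with $L(\gamma)\leq M$ and then to convert this length bound into a bound on $|\mathcal{E}(\gamma)|$ by analyzing how $\gamma$ traverses the triangulation. Enumerate cyclically the intersections of $\gamma$ with the $1$-skeleton as $x_1,\ldots,x_N$ (treating a vertex pass as one of them), and write $\ell_k$ for the length of the arc $\gamma_k$ of $\gamma$ from $x_k$ to $x_{k+1}$. By quasigeodesicity, $\gamma_k$ lies inside a single triangle $T_k$, and the two edges of $T_k$ carrying $x_k$ and $x_{k+1}$ share a unique vertex, which I denote $v(k)$. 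One has $\sum_k\ell_k=L(\gamma)\leq M$, and $|\mathcal{E}(\gamma)|$ agrees with $N$ up to a negligible correction (for instance when $\gamma$ follows an edge), so it suffices to bound $N$.

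The first step uses Lemma~\ref{lem:nonspiral} to rule out long constant runs in the sequence $v(\cdot)$. If $v(k)=v(k+1)=\cdots=v(k+g-1)=v$, then the concatenated arc $\gamma_k\cdots\gamma_{k+g-1}$ stays inside the open star $\mathcal{C}_v$ and crosses the $g+1$ edges adjacent to $v$ that carry $x_k,\ldots,x_{k+g}$; Lemma~\ref{lem:nonspiral} bounds this number by $d_v\leq d$, so $g\leq d-1$. Consequently, in any window of $d$ consecutive values of $v(\cdot)$ there must be at least one index $j$ with $v(j)\neq v(j+1)$.

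The second step invokes Lemma~\ref{lem:quad} at every such change. If $v(j)\neq v(j+1)$, both are endpoints of the common edge carrying $x_{j+1}$ and therefore coincide with its two distinct endpoints; hence, in the quadrilateral $T_j\cup T_{j+1}$ glued along that edge, the two edges carrying $x_j$ and $x_{j+2}$ are opposite sides. Lemma~\ref{lem:quad} then gives $d(x_j,x_{j+2})\geq h$, and since $\gamma$ between $x_j$ and $x_{j+2}$ has length at least this distance, $\ell_j+\ell_{j+1}\geq h$. Combined with the first step, every window of $d+1$ consecutive crossings of $\gamma$ encloses a sub-arc of length at least $h$.

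The conclusion then follows by a counting argument. Partitioning the cyclic sequence of $N$ crossings into disjoint blocks of $d+1$ consecutive indices produces $\lfloor N/(d+1)\rfloor$ pairwise disjoint sub-arcs of $\gamma$ of length at least $h$ each, so $\lfloor N/(d+1)\rfloor\cdot h\leq L(\gamma)\leq M$, and a standard rounding manipulation turns this into $N\leq \lceil(d+1)M/h\rceil$. The main subtleties I anticipate are the careful treatment of degenerate configurations where $\gamma$ passes through a vertex (in which Lemma~\ref{lem:nonspiral} actually yields an even stronger bound) or briefly follows an edge, together with the additive-constant bookkeeping needed to match exactly the stated ceiling rather than a looser $O((d+1)M/h)$ expression.
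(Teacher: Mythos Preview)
Your proposal is correct and follows essentially the same route as the paper: invoke Theorem~\ref{thm:existence} to get a weakly simple closed quasigeodesic of length at most $M$, use Lemma~\ref{lem:nonspiral} to bound how long $\gamma$ can stay in a single star, use Lemma~\ref{lem:quad} to show that leaving a star costs at least $h$ in length, and combine these into a density bound on crossings. The only cosmetic difference is organizational: you encode the ``current star'' by the function $v(k)$ and phrase the non-spiraling bound as a bound on constant runs of $v(\cdot)$, whereas the paper follows the connected component of $\gamma$ in $\mathcal{C}_i$ directly and phrases the conclusion as ``a subarc of length $h$ meets at most $d+1$ edges.'' These are equivalent. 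Your acknowledged subtleties (vertex passes, edge-following, and the additive bookkeeping needed to hit exactly $\lceil (d+1)M/h\rceil$ rather than $(d+1)M/h + O(d)$) are real but minor, and the paper is equally informal on the last point.
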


\begin{proof}

We first observe that $h$ lower-bounds the distance between any vertex and the boundary of its star, and thus in particular the distance between any two vertices. Now, let $\gamma$ be a weakly simple quasigeodesic of length at most $M$, whose existence is guaranteed by Theorem~\ref{thm:existence}. We argue that each subarc of $\gamma$ of length $h$ crosses or follows at most $d+2$ edges, which proves the proposition.

Orient $\gamma$ arbitrarily, and denote by $e=ij$ an arbitrary edge that $\gamma$ crosses or follows. Following $\gamma$ after $e$, the first intersection with the $1$-skeleton of $S$ occurs either at a vertex or an edge. If it occurs at a vertex $v$, then the arc of $\gamma$ between $v$ and the next crossing or vertex has length at least $h$ by the observation, and we are done. If it crosses another edge $f$, without loss of generality, $f$ is also adjacent to $i$. We now look at the connected component of $\gamma$ in $\mathcal{C}_i$ containing the crossings at $e$ and $f$. By Lemma~\ref{lem:nonspiral}, this connected component exits $\mathcal{C}_i$ after at most $d$ crossings. Now, when it exits, it crosses successively two edges adjacent to $i$ and an edge not adjacent to $i$, i.e., two opposite edges of a quadrilateral obtained by gluing two triangles. By Lemma~\ref{lem:quad}, the distance between these two opposite edges is at least $h$. Therefore, a subarc of $\gamma$ of length at most $h$ and starting at an edge crosses or follows at most $d+1$ edges, as required.
\end{proof}

We now have all the tools to prove Theorem~\ref{thm:algorithm}.

\begin{proof}[Proof of Theorem~\ref{thm:algorithm}]
  Let $\gamma$ be a weakly simple closed quasigeodesic whose combinatorial complexity is controlled by $\eta_{\gamma}$ as specified by Proposition~\ref{prop:quasi}. First, we observe that we can assume that this quasigeodesic meets a vertex.To see this, we unfold the sequence of triangles crossed by $\gamma$. Note that this unfolding may a priori have overlapping triangles, as pictured in Figure~\ref{F:test-final} (see also~\cite[Figure~24.20]{demaine}). In this unfolding, we can represent $\gamma$ as a straight line connecting two edges (which are identified to each other in $S$). Now, pushing $\gamma$ in a normal direction does not change the angles at the extremities, and thus preserves the fact that we have a quasigeodesic. Note that it also does not change the length since it in the unfolding, the two connected edges are parallel. Furthermore, while pushing, we do not create self-intersections of $\gamma$ until we reach a vertex: indeed, if there was such a self-intersection outside of a vertex after pushing along a distance $t$, this intersection would be either parallel or transverse. In the latter case, there would have already have been a self-intersection after pushing along a distance $t-\varepsilon$. In the former case, two parallel portions will necessarily reach a vertex, as otherwise they stay parallel in the whole curve, which is impossible for a closed quasigeodesic that does not go through a vertex. Thus, we can do this pushing until we reach a vertex, at which stage the curve $\gamma$ will be weakly simple.

\begin{figure}[h!]
		\begin{center}
		  \includegraphics[width=0.3\textwidth]{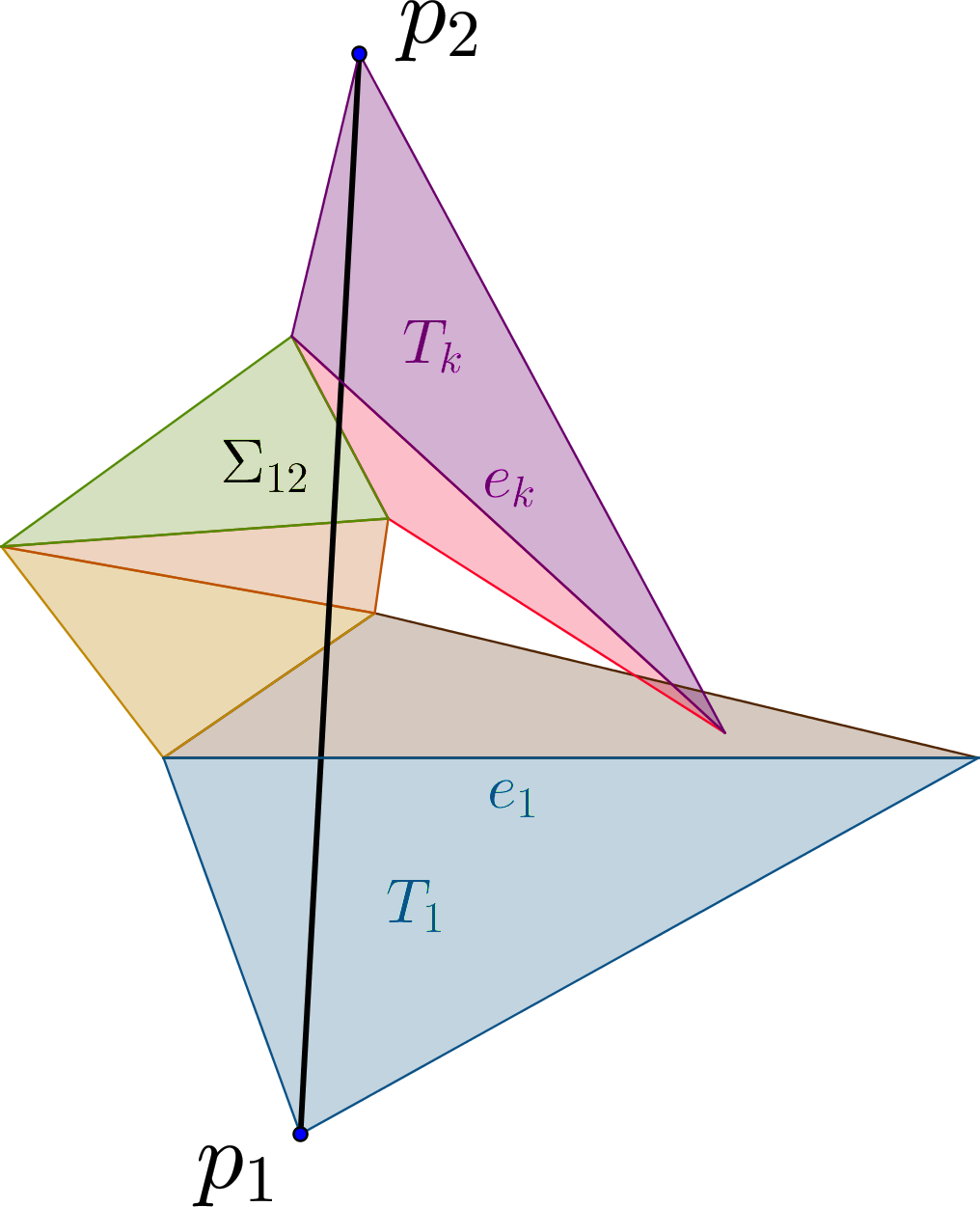}
                  \caption{Unfolding a (tentative) quasigeodesic along the set of edges that it crosses.}
                  \label{F:test-final}
		\end{center}
\end{figure}

  Then, we guess the cyclic word $w$ of size at most $\eta_{\gamma}$ describing the combinatorics of $\gamma$, a weakly simple closed quasigeodesic going through at least one vertex. For each subword $p_1e_1 \ldots e_kp_2$ between two consecutive vertices, if $e_1$ is adjacent to $p_1$, we simply check that the next letter is the other endpoint of $p_1$. Otherwise, we first check that successive letters of that word are adjacent to a common triangle. Then we compute a local unfolding of the polyhedral sphere along the edges $e_1, e_2 \ldots e_k$, i.e., we first place the triangle $T_1$ spanned by $p_1$ and $e_1$, to which we attach along $e_1$ the triangle $T_2$ spanned by $e_2$ and $e_3$, and so on until we reach the last triangle $T_k$ spanned by $e_k$ and $p_2$. Now, in this unfolded picture, we trace the straight line $\Sigma_{12}$ between $p_1$ and $p_2$. There remains to check that the combinatorics of this straight line match those of the guessed word: in the first and last triangles, we check that $\Sigma_{12}$ exits via or follows $e_1$ (or $e_k$), and in each other triangle $T_i$ it suffices to check that the three vertices of $T_i$ are on the sides of $\Sigma_{12}$ prescribed by the edges $e_i$ and $e_{i+1}$ (i.e., if $e_i=ab$ and $e_{i+1}=bc$, then $a$ and $c$ should be one side of $\Sigma_{12}$ while $b$ should be on the other side). Then, we check that the angle between each pair $\Sigma_{i,i+1}$,$\Sigma_{i+1,i+2}$ is within the rules specified by the curvature at the vertex $p_{i+1}$. Finally, we check that this curve is weakly simple, for example via known algorithms~\cite{akitaya,chang2014} or by brute-forcing in exponential time the choice of on which side two overlapping segments can be desingularized.  If all the checks are positive, we have found the unique closed quasigeodesic matching the combinatorics of the word $w$, which is thus weakly simple.
\end{proof}

Finally, let us discuss how to find a simple closed quasigeodesic of bounded length in the case of a convex polyhedron. Following the discussion at the end of Section~\ref{S:existence}, Pogorelov's theorem implies that there exists a simple closed quasigeodesic of length at most $M+\varepsilon$, for an arbitrarily small $\varepsilon$, and allowing as a ``simple'' closed quasigeodesic the degenerate case of a curve connecting twice two vertices of curvature at least $\pi$. This degenerate case is a weakly simple curve that will be found by our algorithm. For the non-degenerate case, the arguments of Proposition~\ref{prop:quasi} apply verbatim to provide a bound on the combinatorics of some simple closed quasigeodesic $\gamma$. If this quasigeodesic goes through at least one vertex, the algorithm described just above finds it, and it is immediate to check that it is simple. If not, we can push it as in the proof of Theorem~\ref{thm:algorithm} to a weakly simple closed geodesic that goes through a vertex, and it will stay simple until it hits that vertex, where it will form an angle exactly $\pi$ in the direction where it came from. Since the total angle at each vertex is at most $2\pi$, this implies that this curve is either degenerate or simple, and in both cases it will be found by our algorithm.

\paragraph*{Acknowledgements.} We thank Francis Lazarus for insightful discussions, and Joseph O'Rourke and the anonymous reviewers for helpful comments.

\bibliographystyle{plainurl}
\bibliography{biblio}

\end{document}